 \newtheorem{nclaim}{Claim}
\newcommand{\eu}[3]{e_{#1}({#2}, {#3})}
\newcommand{\eus}[1]{e_{#1}}
\newcommand{\eup}[1]{e'_{#1}}
\newtheorem{observation}{Observation}
\newcommand{\B}{\mathcal{B}}
\begin{document}
\title{Improved Kernelization and Fixed-parameter Algorithms for Bicluster Editing}
\author{Manuel Lafond}
\authorrunning{M. Lafond}
%
\institute{Université de Sherbrooke, Canada, 
\email{manuel.lafond@USherbrooke.ca}}
\maketitle              
\begin{abstract}
Given a bipartite graph $G$, the \textsc{Bicluster Editing} problem asks for the minimum number of edges to insert or delete in $G$ so that every connected component is a bicluster, i.e. a complete bipartite graph.  This has several applications, including in bioinformatics and social network analysis.  In this work, we study the parameterized complexity under the natural parameter $k$, which is the number of allowed modified edges.
We first show that one can obtain a kernel with $4.5k$ vertices, an improvement over the previously known quadratic kernel.
We then propose an algorithm that runs in time $O^*(2.581^k)$.  Our algorithm has the advantage of being conceptually simple and should be easy to implement.

\vspace{1mm}
\noindent
\textbf{Keywords:} biclustering, graph algorithms, kernelization, parameterized complexity

\vspace{1mm}
\noindent
\textbf{Special acknowledgments:}
The author acknowledges the contributions of Benoît Charbonneau and Pierre-Luc Parent for their help in finding the $4.5k$ lower bound on the kernel size.
\end{abstract}


\section{Introduction}

Partitioning data points into clusters, which are often interpreted as groups of similarity, is a fundamental task in computer science with practical applications in many areas.
Although several formulations of what constitutes a good clustering have been proposed (e.g. based on pairwise distances~\cite{liu2012editing,hartung2015structural,dondi2023tractability}, modularity~\cite{newman2004finding}, random walks~\cite{rosvall2008maps} or likelihood~\cite{karrer2011stochastic}), most approaches are based on the principle that a cluster should contain members that are similar to each other, and different from the members outside of the cluster.  In a graph-theoretic setting, the ideal clustering should therefore consist of disjoint cliques.  

In some applications, one has two classes of data points and only the relationships \emph{between} classes are of interest.  Clustering then takes the form of finding sub-groups in which the members of one class have similar relationships (e.g. groups of people who like the same movies).  This can be modeled as a bipartite graph with two vertex sets $V_1$ and $V_2$, one for each class, and the goal is to  partition $V_1 \cup V_2$ into disjoint \emph{biclusters}, which are complete bipartite graphs.  
This variant has several applications~\cite{tanay2005biclustering,madeira2004biclustering}, for instance in the analysis of social interactions between groups~\cite{barber2007modularity,gnatyshak2012gaining}, gene expression data~\cite{cheng2000biclustering,pontes2015biclustering}, and phylogenetics (e.g. when comparing the left and right descendants of an ancestral species \cite{altenhoff2019oma,lafond2018accurate}).

In this work, we assume that bipartite graphs that do not consist of 
disjoint biclusters are due to erroneous edges and non-edges.  In the \textsc{Bicluster Editing} problem, we ask whether a given bipartite graph $G$ can be transformed into a set of disjoint biclusters by adding/removing at most $k$ edges (we forbid inserting edges between vertices on the same side of the bipartition).  

\vspace{2mm}

\noindent
\textbf{Related work.}  If the input graph is not required to be bipartite, \textsc{Bicluster Editing} is known to be NP-hard even on graphs of maximum degree $3$~\cite{drange_et_al:LIPIcs.IPEC.2015.402} and on dense graphs~\cite{sun2014complexity}.  The former result holds even if only edge deletions are allowed, and also implies that under the Exponential Time Hypothesis, no subexponential time algorithm is possible, either with respect to the number of edges to edit or the number of vertices and edges of the graph.  
If the input graph is required to be bipartite and edges cannot be added between vertices in the same side of the given bipartition, the problem is still NP-hard~\cite{amit2004bicluster}.
To our knowledge, for this variant it is unknown whether hardness also holds on bipartite graphs of bounded maximum degree.

Nonetheless, on bipartite graphs, by observing that graphs of disjoint biclusters coincide with bipartite $P_4$-free graphs, Protti et al.~\cite{protti2006applying} first devised a simple $O^*(4^k)$ time algorithm that finds a $P_4$, and branches over the four possible ways to remove it (here, the $O^*$ notation suppresses polynomial factors).  They also show that \textsc{Bicluster Editing} admits a kernel of $4k^2 + 6k$ vertices.  
In~\cite{guo2008improved}, the authors extend the branching algorithm to solve induced $P_5$'s, if any --- this leads to slightly more case handling but achieves an improved running time of $O^*(3.24^k)$.  A kernel of size $4k$ is also proposed, but this turns out to be a slight inaccuracy (we provide a counter-example).  Note that on graphs that are not necessarily bipartite, an algorithm in time $O^*(2^{5\sqrt{pk}})$ is proposed in~\cite{drange_et_al:LIPIcs.IPEC.2015.402}, where $p$ is the number of desired biclusters.  Going back to the bipartite case, in terms of approximability, a factor $11$ approximation algorithm is presented in~\cite{amit2004bicluster} and is improved to a pivot-based, randomized factor $4$ approximation in~\cite{ailon2012improved}.
Owing to the practical applications of the problem, several heuristics and experiments have also been published~\cite{de2012hybrid,pinheiro2016solving,de2017new,sun2013biclue}.
Recently, Tsur found a $O^*(2.636^k)$ time algorithm for the problem~\cite{tsur2021faster}, and then a $O^*(2.22^k)$ time algorithm~\cite{tsur2023faster}.  Currently, this is the fastest known algorithm for the problem, and was found using automated techniques to evaluate the branching factors for a relatively large number of small graphs.

A closely related problem is \textsc{Cluster Editing}, where the given graph does not have to be bipartite and the goal is to attain a collection of disjoint cliques.  \textsc{Cluster Editing} is known to be NP-hard on graphs of maximum degree at least $6$~\cite{komusiewicz2012cluster}, and to our knowledge the complexity is still open for maximum degree between $3$ and $5$.  The problem admits an $O^*(1.619^k)$ FPT algorithm~\cite{bocker2012golden} (obtained after a series of improvements~\cite{gramm2003graph,gramm2004automated}), and a 3-approximation algorithm~\cite{bansal2004correlation} (a 2.5-approximation is given in~\cite{van2009deterministic} if edit weights satisfy probability constraints).  
In terms of kernelization, the problem admits a kernel with $2k$ vertices~\cite{chen20122k}.  This bound also holds when each possible edge operation is assigned an integral weight~\cite{cao2012cluster}.



\vspace{2mm}

\noindent
\textbf{Our contributions.}
We provide a kernel with at most $4.5k$ vertices for \textsc{Bicluster Editing} on bipartite graphs using three relatively simple reduction rules.  
This is an improvement over the quadratic kernel of~\cite{protti2006applying}.
The proof on the kernel size is based on a combinatorial charging argument.
We also show that our analysis is tight, in the sense that our reduction rules cannot achieve a smaller kernel.
  Our first two rules are the same as those proposed in~\cite{guo2008improved}, and we show that the latter cannot achieve a kernel better than $6k$.  
  
We then provide a fixed-parameter branching algorithm that runs in time $O^*(2.581^k)$.  
The algorithm is the same as the one from the conference version of this paper~\cite{lafond2020even}, where 
complexity $O^*(2.695^k)$ was achieved. The improvement in complexity results from a more refined analysis of the algorithm.
The $O^*(2.22^k)$ time algorithm of Tsur~\cite{tsur2023faster} appeared after the publication of our conference version, and some of the techniques from this version were used to achieve this result (notably, our maximum-deletion solution from Lemma~\ref{lem:solution-with-most-dels} and our degree 1 rule from Lemma~\ref{lem:degree-one} are used in Tsur's algorithm).  
In comparison, the main advantage of our algorithm is conceptually simple, since it requires no automated search nor complicated branching rules. 
Indeed, we use a simple idea: two vertices are in conflict if their neighborhood intersects but have different neighbors.  We find two conflicting vertices, and branch into every way of eliminating the conflict, which requires making their neighborhood disjoint or identical.
This is a simple but powerful idea that yields a $O^*(3.237^k)$ time algorithm with a very simple analysis.  We then show that by incorporating a branching rule to handle degree $1$ vertices and by tightening our analysis, we can achieve time $O^*(2.581^k)$.  We mention that the $O^*(2.22^k)$ algorithm of Tsur can be seen as a generalization of this idea: instead of resolving conflicts between two vertices, one can take a larger set $X$ of vertices, and branch into every way of eliminating every conflict vertices of $X$.  This requires branching over every way of partitioning $X$ and analyzing the worst-case branching factor, which is (quite roughly speaking) one of the ideas proposed in~\cite{tsur2023faster}.

\section{Preliminary notions.}
For an integer $n$, we use the notation $[n] = \{1,2,\ldots,n\}$ (which is the empty set if $n \leq 0$).  Given two sets $A$ and $B$, we write 
$A \triangle B = (A \setminus B) \cup (B \setminus A)$ for the \emph{symmetric difference} between $A$ and $B$.
Let $G = (V, E)$ be a graph and let $v \in V$. 
We write $N_G(v)$ for the set of neighbors of $v$ and define $deg_G(v) = |N_G(v)|$.  The subscript $G$ may be dropped if it is clear from the context.
For $F$ a set of unordered pairs of $V$, we write $G - F$ for the graph $(V, E \setminus F)$.  If $F = \{xy\}$ has a single element, we may write $G - xy$ instead of $G - \{xy\}$.

The vertex set $V(G)$ of a bipartite graph $G = (V_1 \cup V_2, E)$ has two disjoint subsets $V_1$ and $V_2$ which are independent sets.
For $X \subseteq V_i$, $i \in \{1,2\}$, we define $N_G(X) = \bigcup_{x \in X} N_G(x)$.
Two vertices $u, v \in V_i$, with $i \in \{1,2\}$ are called \emph{twins} if $N(u) = N(v)$.
Note that twins form an equivalence relation.  A \emph{twin class} is an equivalence class, i.e. 
a maximal subset $X \subseteq V(G)$ such that $x_1$ and $x_2$ are twins for every $x_1, x_2 \in X$ (note that a twin class may consist of a single vertex).

Given a bipartite graph $G = (V_1 \cup V_2, E)$ and subsets $X \subseteq V_1$ and $Y \subseteq V_2$, we say that $X \cup Y$ form 
a \emph{bicluster} if $N(x) = Y$ for all $x \in X$ and $N(y) = X$ for all $y \in Y$ (note that an isolated vertex is a bicluster).   A graph is a \emph{bicluster graph} if each of its connected components is a bicluster.
We will often use $\B$ to denote a bicluster graph.
For bipartite graph $G = (V_1 \cup V_2, E)$, we say that a bicluster graph $\B$ is a \emph{biclustering of $G$} if $V(\B) = V(G)$ and all edges of $\B$ have an endpoint in each of $V_1$ and $V_2$.  The \emph{cost} of $\B$ (with respect to $G$) is $|E(G) \triangle E(\B)|$.  The edges in $E(G) \setminus E(\B)$ are \emph{deletions}, and the edges in $E(\B) \setminus E(G)$ are \emph{insertions}.

The \textsc{Bicluster Editing} problem asks, for a given a bipartite graph $G$ and a non-negative integer $k$, whether there exists a biclustering $\B$ of $G$ 
such that ${|E(G) \triangle E(\B)| \leq k}$.  We say that $\B$ is \emph{optimal} if no biclustering of $G$ has cost strictly less than $\B$.

For $u \in V(G)$, we denote by $\eu{u}{G}{\B}$ the number of elements of $E(G) \triangle E(\B)$ that contain $u$ (i.e. $\eu{u}{G}{\B}$ counts the number of inserted and deleted edges incident to $u$).  

\begin{observation}\label{obs:sumcost}
Let $G = (V_1 \cup V_2, E)$ be a bipartite graph and let $\B$ be a biclustering of $G$.  Let $i \in \{1,2\}$.  
Then $|E(G) \triangle E(\B)| = \sum_{u \in V_i} \eu{u}{G}{\B}$.
\end{observation}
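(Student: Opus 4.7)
The plan is to observe that every edge of $E(G) \triangle E(\B)$ is a bipartite edge with exactly one endpoint in $V_i$, and therefore is counted exactly once in the sum $\sum_{u \in V_i} \eu{u}{G}{\B}$.

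Concretely, I would first recall that $G$ is bipartite with parts $V_1, V_2$, so every edge of $E(G)$ has one endpoint in $V_1$ and one endpoint in $V_2$. By the definition of a biclustering of $G$, every edge of $E(\B)$ also has one endpoint in each of $V_1$ and $V_2$. Consequently, every element of the symmetric difference $E(G) \triangle E(\B)$ is an edge with exactly one endpoint in $V_i$ (for either $i \in \{1,2\}$).

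Next, I would use this to perform the counting. By definition, $\eu{u}{G}{\B}$ is the number of elements of $E(G) \triangle E(\B)$ containing $u$. When we compute $\sum_{u \in V_i} \eu{u}{G}{\B}$, each edge $e \in E(G) \triangle E(\B)$ contributes $1$ for every endpoint it has in $V_i$. Since we just established that $e$ has exactly one endpoint in $V_i$, each such edge is counted precisely once, yielding $\sum_{u \in V_i} \eu{u}{G}{\B} = |E(G) \triangle E(\B)|$.

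There is essentially no obstacle here; the statement is a direct bookkeeping consequence of the bipartite structure of both $G$ and $\B$. The only subtle point worth making explicit is that the definition of a biclustering of $G$ forbids edges of $\B$ within a single side $V_i$, which is exactly what ensures that summing over one side does not miss or double-count any edited edge.
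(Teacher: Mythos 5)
Your proof is correct and takes essentially the same approach as the paper: both argue that every edge of $E(G) \triangle E(\B)$ has exactly one endpoint in $V_i$ (since $G$ is bipartite and, by definition of a biclustering, so is $\B$ with respect to the same parts), and hence summing $\eu{u}{G}{\B}$ over $V_i$ counts each modified edge exactly once. Your version simply makes the bipartiteness of $\B$ more explicit than the paper does.
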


\begin{proof}
    Each edge in $E(G) \triangle E(\B)$ has exactly one endpoint in $V_i$.  Thus by summing the $\eu{u}{G}{\B}$ values for each $u$ of $V_i$, we count each modified edge exactly once.
    \qed
\end{proof}

\section{A $4.5k$ kernel}

From here on, we assume that $G = (V_1 \cup V_2, E)$ is a given bipartite graph and that $k$ is a positive integer (if $k = 0$, we can easily verify that $G$ is itself a bicluster graph).
We provide three reduction rules based on those proposed in~\cite{guo2008improved}.
We use the notion of a \emph{sister} of a vertex $u$.  
Let $R$ be a twin class of $G$.
Let $S = N(R)$ and $t \in N(S) \setminus R$.
Then $t$ is a \emph{sister} of $R$ if the two following conditions are satisfied:

\begin{itemize}
    \item
    $t$ does not have a twin in $G$ (i.e. $\{t\}$ is a twin class);
    
    \item 
    $N(t) = S \cup \{v\}$ for some vertex $v \in V(G) \setminus S$.
\end{itemize}

In words, $t$ is a sister of $R$ if $t$ has the same neighbors as every element of $R$, with the exception of an extra vertex $v$, and $t$ has no twin.  See Figure~\ref{fig:twins}.

\begin{figure}
    \centering
    \includegraphics[width=0.4\textwidth]{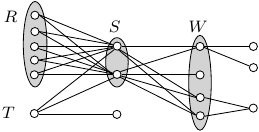}
    \caption{$R$ is a twin class, $S = N(R)$, $T$ are the sisters of $R$ (there is only one) and $W = N(S) \setminus (T \cup R)$.  Note how the two bottom vertices of $W$ are not sisters of $R$, because they are twins.}
    \label{fig:twins}
\end{figure}

For $u \in V(G)$, we say that $t$ is a sister of $u$ if $t$ is a sister of the twin class containing $u$. 
We can now describe our reduction rules.

\vspace{2mm}

\noindent 
\textbf{Rule 1}: if a connected component $X$ of $G$ is a bicluster, remove $X$ from $G$.

\vspace{2mm}

\noindent 
\textbf{Rule 2}: if there is a twin class $R$ such that $|R| > |N(N(R)) \setminus R|$, then remove any vertex from $R$.

\vspace{2mm}

\noindent 
\textbf{Rule 3}: let $R$ be a twin class, let $T$ be the set of sisters of $R$, and let $W = N(N(R)) \setminus (R \cup T)$. 
If $|R| > |W|$ and $|T| \geq 1$, then choose any $t \in T$ and remove the edge between $t$ and $N(t) \setminus N(R)$.   Then reduce $k$ by $1$.

\vspace{2mm}

The idea of Rule 3 is that if removing a single edge from a vertex of $T$ would help that vertex join a ``large enough'' twin-class, then we should do so.
For instance in Figure~\ref{fig:twins}, according to Rule 3, we can remove the edge between the vertex of $T$ and the bottom degree $1$ vertex, making the size of the  twin class $R$ larger. 

Note that in~\cite{guo2008improved}, Rule 1 and Rule 2 were introduced and the authors suggested that they would lead to a kernel of size $4k$.  However, the $P_6$ graph is a counter-example to this claim (the $P_6$ is the chordless path with $6$ vertices).  Indeed, the $P_6$ admits a bicluster graph with $k = 1$, has $6k$ vertices, and Rule 1 and Rule 2 do not apply.  
It is plausible that Rule 1 and Rule 2 can lead to a $6k$ kernel, whereas Rule 3 allows us to reduce this to $4.5k$ (in particular, Rule 3 does reduce the $P_6$ when we consider $R$ containing a degree $1$ vertex).  
Observe that for any $k \geq 1$, taking $k$ disjoint unions of $P_6$ graphs results in a graph with $6k$ vertices that requires $k$ edge modifications.

It is easy to see that Rule 1 is safe.  
Rule 2 was already shown to be safe in~\cite{guo2008improved}, but we include a proof here for completeness.  The proof was based on a result on twins that we restate here, as it will be useful later.

\begin{lemma}[\cite{guo2008improved}]\label{lem:twins-in-same}
There exists an optimal biclustering $\B$ of $G$ such that, for every twin class $R$ of $G$, the vertices of $R$ are all in the same bicluster of $\B$.
\end{lemma}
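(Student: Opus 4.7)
The plan is to use a standard exchange argument combined with a minimality assumption on a chosen optimal biclustering.

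First, I would fix an optimal biclustering $\B$ of $G$ that minimizes, among all optimal biclusterings, the quantity $\Phi(\B) = \sum_{R} (c_\B(R) - 1)$, where the sum ranges over twin classes $R$ of $G$ and $c_\B(R)$ denotes the number of biclusters of $\B$ that contain at least one vertex of $R$. If $\Phi(\B) = 0$, every twin class lies in a single bicluster and we are done, so suppose for contradiction that some twin class $R$ has $c_\B(R) \geq 2$. Without loss of generality, assume $R \subseteq V_1$, and let $S = N_G(R)$ be the common neighborhood of the vertices of $R$ in $G$. Let $B_1, \ldots, B_p$ be the biclusters of $\B$ meeting $R$, with $R_i = R \cap B_i$ and $C_i = B_i \cap V_2$ the opposite side of $B_i$. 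For any $r \in R_i$, we have $N_G(r) = S$ and $N_\B(r) = C_i$, so $\eu{r}{G}{\B} = |S \triangle C_i|$.

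Next, I would define a modified biclustering $\B'$ obtained by picking $j \in \arg\min_i |S \triangle C_i|$ and moving every vertex of $R$ into the bicluster $B_j$ (equivalently, each $r \in R$ is assigned neighborhood $C_j$ in $\B'$, and its old bicluster loses it). The key verification is that $\B'$ is a valid biclustering: the biclusters $B_i$ with $i \neq j$ simply lose some vertices on the $V_1$ side (the remaining graph $(R_i \setminus \emptyset$'s removal$) \cup C_i$ is still a complete bipartite graph, and if the $V_1$ side empties out, the $C_i$ vertices become isolated, which are trivially biclusters). The bicluster $B_j$ becomes $(R \cup (B_j \cap V_1)) \cup C_j$, which is still complete bipartite.

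Now I would compare costs using Observation~\ref{obs:sumcost} summed over $V_1$. For any $u \in V_1 \setminus R$, its bicluster membership is unchanged in $\B'$, and because we only moved vertices of $V_1$, the $V_2$ side of $u$'s bicluster is unchanged; thus $\eu{u}{G}{\B'} = \eu{u}{G}{\B}$. The entire cost difference is therefore
\begin{equation*}
|E(G) \triangle E(\B')| - |E(G) \triangle E(\B)| = |R|\cdot |S \triangle C_j| - \sum_{i=1}^{p} |R_i|\cdot |S \triangle C_i| \leq 0,
\end{equation*}
by the choice of $j$ and the identity $|R| = \sum_i |R_i|$. Hence $\B'$ is also optimal. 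Finally, I would observe that $\Phi$ strictly decreases: the twin class $R$ now occupies only one bicluster in $\B'$ (a drop of at least $1$), while every other twin class $R'$ is contained in exactly the same set of biclusters as before, since none of its vertices moved. This contradicts the minimality of $\Phi(\B)$, completing the proof.

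The only delicate step is the cost comparison, and the reason it goes through cleanly is precisely Observation~\ref{obs:sumcost}: by summing per-vertex contributions on the side $V_1$ containing $R$, the moved vertices are isolated in the bookkeeping and the contributions of non-moved vertices on the same side are invariant. I do not anticipate significant difficulty beyond carefully handling the corner cases where some $R_i = R$ (in which case no move happens) or where some $B_i \setminus R_i$ becomes a union of isolated vertices (which are still valid biclusters).
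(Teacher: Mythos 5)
Your argument is, in substance, the same exchange argument the paper uses: you relocate a split twin class $R$ into the bicluster minimizing the per-vertex modification count and invoke Observation~\ref{obs:sumcost} on the side containing $R$ so that only the moved vertices' cost terms change. The paper moves the vertices of $R$ one at a time into the bicluster of the member $u$ minimizing $\eu{u}{G}{\B}$, which is exactly your choice of $j \in \arg\min_i |S \triangle C_i|$; your cost comparison $|R|\cdot|S\triangle C_j| \le \sum_i |R_i|\cdot|S\triangle C_i|$ is correct and is the same inequality in aggregate form.

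The one step that does not hold as written is the claim that $\Phi$ strictly decreases because ``every other twin class $R'$ is contained in exactly the same set of biclusters as before, since none of its vertices moved.'' Biclusters are connected components of $\B$, so if some $B_i$ with $i \neq j$ has $B_i \cap V_1 = R_i$, then after the move $C_i$ is left with no neighbours and splinters into $|C_i|$ singleton biclusters; a $V_2$ twin class with two or more vertices in such a $C_i$ then sees $c_{\B'}(R')$ \emph{increase}, and $\Phi$ need not drop. (The paper's own termination claim --- ``the number of twins that are in the same bicluster increases'' --- glosses over the same corner case, so this is a shared loose end rather than a defect unique to your route; but you assert the invariance explicitly, so it should be repaired.) A clean fix is to order optimal biclusterings lexicographically by $(\Phi_1, \Phi_2)$, where $\Phi_1$ is the number of biclusters containing at least one edge and $\Phi_2$ is your $\Phi$. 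Your move never increases $\Phi_1$: biclusters other than $B_j$ only lose vertices, and $B_j$ only gains $V_1$-vertices joined to the already-present $C_j$. If $\Phi_1$ strictly decreases we are done; if it stays equal, no $C_i$ was orphaned, your invariance claim does hold, and $\Phi_2$ strictly decreases. Either way the pair strictly decreases and your minimality argument goes through.
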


\begin{proof}
Let $\B$ be an optimal biclustering of $G$.  For a vertex $w \in V(G)$, we write  $\eus{w} := \eu{w}{G}{\B}$ for the number of modified edges incident to $w$.
Suppose that there is a twin class $R$ such that not all elements of $R$ are in the same bicluster of $\B$. 
Let $u$ be a vertex of $R$ that minimizes $\eus{u}$, and let $B_u$ be the set of vertices of the bicluster that contains $u$.  Let $v \neq u$ be a vertex of $R$ such that $v$ is not in the same bicluster as $u$ in $\B$.  
Consider the biclustering $\B'$ obtained from $\B$ by removing $v$ from its bicluster, and adding it to $B_u$ instead.  For $w \in V(G)$, denote $\eup{w} := \eu{w}{G}{\B'}$.
By Observation~\ref{obs:sumcost}, we infer that the change in cost $|E(G) \triangle E(\B')| - |E(G) \triangle E(\B)|$ is equal to $\eup{v} - \eus{v}$, since all vertices on the same side as $v$ have the same neighbors in either $\B$ or 
$\B'$, except $v$.  Since $u$ and $v$ are twins, we have $\eup{v} = \eup{u} = \eus{u}$, and, since $\eus{u} \leq \eus{v}$ by the choice of $u$, we have $\eup{v} - \eus{v} \leq \eus{u} - \eus{u} = 0$.  
In other words, after moving $v$, the cost has not increased, so $\B'$ is optimal as well.  We may repeat this procedure for every $v$ of $R$ not in the same bicluster as $u$.  Once this is done, since every time we apply this procedure, the number of twins that are in the same bicluster increases, we may repeat this for every twin class $R$, yielding the result.
\qed
\end{proof}

We can now argue that our reduction rules preserve equivalent instances.  Recall that a reduction rule is \emph{safe} when the original instance is a \emph{yes} instance if and only if the modified instance is a \emph{yes} instance.

\begin{lemma}
    Rule 2 is safe.
\end{lemma}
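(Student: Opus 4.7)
\emph{The plan is to show the equivalence of instances $(G, k)$ and $(G', k)$, where $G' = G - r$ for the vertex $r \in R$ removed by Rule 2.}

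For the forward direction, let $\B$ be a biclustering of $G$ with cost at most $k$. By Lemma~\ref{lem:twins-in-same}, I may assume all of $R$ lies in a single bicluster of $\B$, so that removing $r$ from that bicluster yields a biclustering $\B'$ of $G'$ with $\mathrm{cost}(\B') = \mathrm{cost}(\B) - \eu{r}{G}{\B} \leq k$.

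For the backward direction, let $\B'$ be a biclustering of $G'$ with cost at most $k$; I will construct a biclustering $\B$ of $G$ of cost at most $k$. By Lemma~\ref{lem:twins-in-same} applied to $G'$ (where $R' := R \setminus \{r\}$ remains a twin class), I may assume $R'$ is contained in a single bicluster $B$ of $\B'$; write $B = T \cup S'$ with $R' \subseteq T \subseteq V_i$ and $S' \subseteq V_j$. The key step is to modify $\B'$ into a biclustering $\B^*$ of $G'$ with $\mathrm{cost}(\B^*) \leq \mathrm{cost}(\B')$ in which the bicluster containing $R'$ has $V_j$-side exactly $S = N(R)$. Once such $\B^*$ is constructed, placing $r$ into that bicluster gives a biclustering $\B$ of $G$ with $\mathrm{cost}(\B) = \mathrm{cost}(\B^*) \leq k$, since $N_G(r) = S$ matches the bicluster's $V_j$-side exactly.

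To produce $\B^*$, I move each $s \in S \setminus S'$ from its current bicluster into $B$, and each $s \in S' \setminus S$ out of $B$ (to a singleton bicluster). Using Observation~\ref{obs:sumcost} to sum contributions from $V_i$, the share of $R'$ alone drops from $|R'| \cdot |S \triangle S'|$ to $0$, an absolute saving of $|R'| \cdot |S \triangle S'|$. The hypothesis $|R| > |W|$ (equivalently, $|R'| \geq |W|$) is what ensures this saving dominates the extra cost induced by the moves: each $s \in S \setminus S'$ was already paying at least $|R'|$ in $\B'$ for the deletions of its edges to $R'$ (since $R' \subseteq B$ and $s \notin B$), while after the move $s$ pays at most $|W_s| \leq |W| \leq |R'|$ for the edges between $s$ and $W$; a symmetric analysis handles $s \in S' \setminus S$.

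The main obstacle I expect is the bookkeeping when $T$ is strictly larger than $R'$, because then a vertex $s$ joining $B$ may incur additional edits from insertions into $T \setminus R'$, and a vertex $s$ leaving $B$ may lose savings relative to $T \setminus R'$. I intend to handle this by first arguing, again via an exchange argument powered by the optimality of $\B'$ together with $|R'| \geq |W|$, that $T$ may be assumed to satisfy $T \setminus R' \subseteq W$, and then verifying that the total cost change of the bundled moves is non-positive via a careful side-by-side tally using Observation~\ref{obs:sumcost}.
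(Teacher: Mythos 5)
Your overall strategy matches the paper's: the forward direction is immediate, and the backward direction is an exchange argument that reorganizes an optimal biclustering of $G-r$ so that $R'$ sits in a bicluster whose $V_j$-side is exactly $S$, after which $r$ is reinserted for free. However, the route you take to produce $\B^*$ has a genuine gap, and it is precisely the one you flag at the end. You keep the $V_i$-side $T$ of $B$ fixed and swap its $V_j$-side from $S'$ to $S$; your per-vertex bound ``$s$ pays at most $|W|\leq|R'|$ after the move'' only holds if every vertex of $T\setminus R'$ lies in $W$ (otherwise a vertex $s\in S\setminus S'$ joining $B$ must be connected by insertions to all of $T\setminus(R'\cup W)$, and nothing bounds the size of that set by $|W|$). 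The sub-claim ``we may assume $T\setminus R'\subseteq W$'' is not obviously true for an arbitrary optimal $\B'$ (a vertex $y\in T\setminus(R'\cup W)$ can be held inside $B$ by its edges to $S'\setminus S$), and proving that it can be assumed is essentially another exchange argument of the same difficulty as the main one --- so the crux of the proof has been deferred rather than resolved. The paper avoids this entirely by \emph{extracting} a fresh bicluster $R'\cup S\cup W'$ (with $W'=B\cap W$) and leaving the remainder of $B$ behind: every $V_i$-vertex left behind is automatically a non-neighbor of every vertex of $S$ (since $N(S)\subseteq R\cup W$), so separating them from $S'$ only saves insertions and no assumption on $T\setminus R'$ is needed.

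Two smaller issues. First, your accounting sketch double-counts: the saving $|R'|\cdot|S\triangle S'|$ attributed to $R'$ and the ``at least $|R'|$ that $s$ was already paying'' refer to the \emph{same} deleted edges between $R'$ and $S\setminus S'$; Observation~\ref{obs:sumcost} lets you sum the quantities $\eu{w}{G'}{\B'}$ over one side of the bipartition only, so you must commit to one side (the paper sums savings over $V_i$ and new costs over $V_j$, each edge counted once on each ledger). Second, when $|R|=1$ we have $R'=\emptyset$ and Lemma~\ref{lem:twins-in-same} gives you nothing to anchor the argument on; the paper disposes of this case separately via Rule~1 before assuming $|R|\geq 2$.
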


\begin{proof}
    To prove the statement, suppose that $G$ has a twin class $R$ such that ${|R| > |N(N(R)) \setminus R|}$, and let $u \in R$.  
    We show that there exists a biclustering of $G$ of cost at most $k$ if and only if there exists a biclustering of $G - u$ of cost at most $k$.

    The forward direction is easy to see as removing vertices cannot increase the number of required modifications, so we focus on the other direction.
    If $|R| = 1$, then $|R| > |N(N(R)) \setminus R|$ implies that the latter quantity is $0$, so that members of $N(R)$ only have the vertex of $R$ in their neighborhood.  In this case, $R \cup N(R)$ induces a bicluster and we may remove it by Rule 1.  In particular, we may remove $u$ without affecting the cost.  Hence, we will assume that $|R| \geq 2$.

    Let $u \in R$.
    Denote $G' := G - u$ and $R' := R \setminus \{u\}$.  Note that $R'$ is a non-empty twin class of $G'$.
    Let $\B$ be a biclustering of $G'$ of cost at most $k$ such that every element of $R'$ is in the same bicluster, which exists by Lemma~\ref{lem:twins-in-same}.
    Suppose without loss of generality that $R' \subseteq V_1$.
    We argue that we may assume that no 
    modified edge is incident to a vertex of $R'$.  Let $B$ be the set of vertices of the bicluster of $\B$ containing $R'$.  Then let $S := N_G(R)$ (which is equal to $N_{G'}(R')$), $S' := B \cap S$, $W := N_G(S) \setminus R$, $W' := B \cap W$, and $Z := B \cap (V_2 \setminus S')$.
    Note for later reference that $|R| > |N(N(R)) \setminus R| = |W|$ implies $|R'| \geq |W|$.

    Consider the biclustering $\B'$ obtained from $\B$ as follows:
    \begin{itemize}
        \item 
        remove the vertices of $R' \cup S \cup W'$ from their respective biclusters;

        \item 
        add $R' \cup S \cup W'$ as a new bicluster.
    \end{itemize}
    Among the modifications in $E(G) \triangle E(\B)$ that are not in $E(G) \triangle E(\B')$, we count at least $|R'||Z| + (|S| - |S'|)|R'|$ modifications, respectively for insertions between $R'$ and $Z$, and deletions between $R'$ and $S \setminus S'$ (see Figure~\ref{fig:rule2proof}). 

    \begin{figure}
        \centering
        \includegraphics[width=\textwidth]{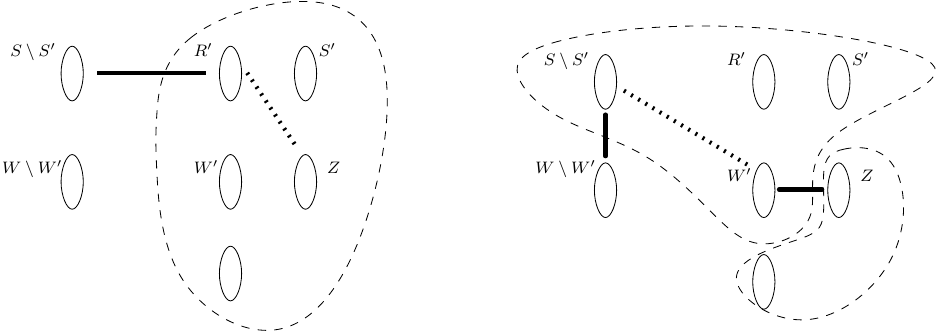}
        \caption{Left: an illustration of $\B$ and relevant subsets of vertices.  The dashed shape indicates $B$, the bicluster that contains $R'$.  The fat edge represents deletions in $\B$ not in $\B'$, and the fat dotted edge insertions in $\B$ not in $\B'$.  Right: an illustration of $\B'$.  The upper dashed shape indicates the new bicluster to introduce, and the lower dashed shape is what remains in $B$ after modification.  Fat edges are the worst-case deletions in $\B'$ not in $\B$, the dotted edge represents insertions in $\B'$ not in $\B$.}
        \label{fig:rule2proof}
    \end{figure}
    
    Consider now the number of modifications in $E(G) \triangle E(\B')$ but not in $E(G) \triangle E(\B)$.  By Observation~\ref{obs:sumcost}, it suffices to look at vertices of $V_2$ whose neighborhood has changed from $\B$ to $\B'$. 
    Note that only vertices of $S$ and $Z$ could have changed neighborhood (if we take a vertex of $V_2 \setminus (S \cup Z)$, none of its neighbors in $\B$ is in $R' \cup W'$ and thus it has not changed neighbors in $\B'$).  The vertices of $S'$ do not incur additional cost since they were only separated from some of their non-neighbors that were in $B$ (bottom subset in Figure~\ref{fig:rule2proof}).
    Each vertex of $S \setminus S'$ is incident to at most $|W| - |W'|$ deletions between $S \setminus S'$ and $W \setminus W'$ that are in $\B'$ but not in $\B$, and at most $|W'|$ insertions between $S \setminus S'$ and $W'$.  Hence in total the vertices of $S$ are incident to at most $|W|(|S| - |S'|)$ modified edges in $\B'$ not in $\B$.
    Each vertex of $Z$ is incident to at most $|W'|$ deletions in $\B'$ not in $\B$, and aside from $R'$, the rest of the neighborhood of $Z$ is unchanged.

    It follows that the number of modifications in $\B'$ not in $\B$ is at most 
    \begin{align*}
    &~|W|(|S| - |S'|) + |W'||Z|
    \leq |R'|(|S| - |S'|) + |R'||Z|
    \end{align*}
    since $|R'| \geq |W| \geq |W'|$.
    This means that from $\B$ to $\B'$, we save at least as many modifications as we create, and hence $\B'$ is also optimal.  Importantly, no vertex of $R'$ is incident to a modified edge in $\B'$.

    So, let us consider the biclustering $\B'$.  We add $u$ to the bicluster that contains $R'$.  Since vertices of $R'$ are unmodified, introducing $u$ has no cost, and thus we obtain a biclustering of $G$ that also has cost at most $k$.
    \qed
\end{proof}


We now proceed to the safety of Rule 3 with a similar but more involved argument.

\begin{lemma}\label{lem:rule3-is-fine}
Rule 3 is safe.
\end{lemma}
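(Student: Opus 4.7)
I will prove the equivalence $(G,k)$ yes-instance $\iff$ $(G-tv_t, k-1)$ yes-instance. The backward direction is essentially free: if $\B$ is a biclustering of $G - tv_t$ of cost at most $k-1$, then viewed as a biclustering of $G$ its cost differs by at most one (depending on whether $t$ and $v_t$ share a bicluster in $\B$), so it is at most $k$.

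For the forward direction, I plan to show that some optimal biclustering $\B^*$ of $G$ deletes $tv_t$; viewed as a biclustering of $G - tv_t$, it then has cost one less, giving the desired instance. Apply Lemma~\ref{lem:twins-in-same} to obtain an optimal biclustering $\B$ in which $R$ is contained in a single bicluster $B_R$. If $t$ and $v_t$ already lie in different biclusters of $\B$, take $\B^* := \B$ and we are done. Otherwise they share some bicluster $B^\circ$, which may coincide with $B_R$ or not. I construct $\B^*$ by replacing $B_R$ (and $B^\circ$, if different) with a single new bicluster $B^*_R := R \cup \{t\} \cup S \cup W^*$, where $W^* \subseteq W$ consists of the vertices of $W$ that $\B$ placed in $B_R \cup B^\circ$; the remaining vertices of the replaced biclusters are re-attached to their natural biclusters in $\B$ or left isolated. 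Because $R$ is a twin class with $N(R) = S$ and $t$ is a sister of $R$ with $N(t) = S \cup \{v_t\}$, the only modified edge incident to $R \cup \{t\}$ inside $B^*_R$ is the deletion $tv_t$, and no edge between $R \cup \{t\}$ and $S$ is modified.

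The cost comparison of $\B^*$ against $\B$ parallels the Rule 2 safety proof. Using Observation~\ref{obs:sumcost}, I sum modifications over $V_1$ or $V_2$ as convenient, and partition the contribution by the affected vertices: gains accumulate around $R$ (which contributes no modifications in $B^*_R$) and around $v_t$ (which in $\B$ would bring at least $|R|$ insertions when sharing $B_R$, since $v_t \notin S = N(R)$), while losses accumulate on vertices of $S$ (for $V_2 \setminus S$ neighbors missing from $W^*$) and on $W \setminus W^*$. The key inequality $|R| > |W|$ absorbs these losses exactly as in the Rule 2 analysis, since each newly inserted or deleted edge on the $W$-side can be charged to a strictly larger class of saved modified edges on the $R$-side.

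The main technical obstacle is the subcase $B^\circ \neq B_R$, where two biclusters of $\B$ are modified simultaneously and the symmetric-difference bookkeeping runs in parallel over $B_R \cap V_2$, $B^\circ \cap V_2$, and the migrating $W$-vertices. Here the uniqueness of $t$, i.e., that $\{t\}$ is its own twin class so that $v_t$ is not shared with another sister, is what prevents parallel miscounts, and the strict inequality $|R| > |W|$ provides the final slack that both brings the net cost change to at most $0$ and secures the one-edge saving corresponding to the deletion of $tv_t$, which matches the decrement of $k$ that Rule 3 prescribes.
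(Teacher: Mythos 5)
Your overall strategy matches the paper's: show that some optimal biclustering of $G$ deletes the sister's extra edge by rebuilding a bicluster around $R \cup S$, and use $|R| > |W|$ to absorb the cost. But there are two genuine gaps. First, your new bicluster $B^*_R = R \cup \{t\} \cup S \cup W^*$ contains only the one sister $t$, whereas you pull \emph{all} of $S$ into it. Any other sister $t' \in T$ that shared a bicluster with some part $S_i \subseteq S$ in $\B$ now loses those neighbors and incurs $|S_i|$ fresh deletions, and since $|T|$ is unbounded and unrelated to $|R|$ and $|W|$, the inequality $|R| > |W|$ gives you nothing to charge these deletions to. (A small concrete instance: $R = \{r\}$, $S = \{s\}$, two sisters $t, t'$ each with a private degree-one neighbor; the optimal $\B$ with $B_R = \{r, t', s\}$ and $B^\circ = \{t, v_t\}$ has cost $2$, but your $\B^*$ with $t'$ expelled and left isolated has cost $3$.) The paper avoids this precisely by putting the \emph{entire} set $T$ of sisters into the new bicluster, so that each sister trades its at most one extra-neighbor edge against the $|S| - |S_i|$ deletions it already paid in $\B$; this is also why the paper proves the stronger statement that all sisters' extra edges can be assumed deleted simultaneously.

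Second, the cost comparison is the actual content of the lemma and you have only asserted it. It does not ``parallel the Rule 2 safety proof'' closely enough to be inherited: Rule 3's accounting must additionally track, for every bicluster $B_i$ of $\B$, the sisters $T_i$, their extra neighbors $U_i$, and the leftover $V_2$-vertices $Z_i$, and the paper's argument is an explicit injection from $E(G) \triangle E(\B')$ into $E(G) \triangle E(\B)$ built from four reservoirs of saved modifications, with a separate case analysis for $|U_i| = 0$, $|U_i| = 1$, and $|U_i| \geq 2$ (the $|U_i| = 1$ case is exactly where the ``$t$ has no twin'' condition is consumed, and it requires borrowing a leftover unit from the $R$--$S$ reservoir). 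You correctly identify the subcase $B^\circ \neq B_R$ as the main obstacle, but saying that uniqueness of $t$ ``prevents parallel miscounts'' and that $|R| > |W|$ ``provides the final slack'' is naming the ingredients, not performing the count. As written, the proof does not go through.
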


\begin{proof}
Let $R$ be a twin class of $G$ and let $T$ be the set of sisters of $R$.  
Also let $S := N(R)$ and $W := N(S) \setminus (R \cup T)$.  Suppose that  $|R| > |W|$ and $|T| \geq 1$. 
Let $u \in T$ and let $v$ be the vertex of $N(u) \setminus S$.
We show that there exists a biclustering of $G$ of cost at most $k$ if and only if there exists a biclustering of $G - uv$ of cost at most $k - 1$.

Suppose that $G - uv$ admits a biclustering of cost at most $k - 1$.  Then we can transform $G$ into $G - uv$ with one deletion, and use $k - 1$ modification to obtain a bicluster graph, which shows that $G$ admits a biclustering with at most $k$ modifications. 

Conversely, suppose that $G$ admits a biclustering of cost at most $k$.  We want to show that $G - uv$ admits a biclustering of cost at most $k - 1$.   We prove a more general statement: that there exists an optimal biclustering of $G$ in which all edges between vertices of $T$ and their extra neighbor not in $S$ are deleted.
Let $\B$ be a optimal biclustering of $G$, whose cost is at most $k$, in which all vertices of $R$ are in the same bicluster (this exists by Lemma~\ref{lem:twins-in-same}).
Let $B_1, \ldots, B_l$ be the sets of vertices in the biclusters of $\B$.
Assume without loss of generality that $R \subseteq V_1$, and that $R \subseteq B_1$.  
For $i \in [l]$, define $T_i := B_i \cap T, W_i := B_i \cap W, Y_i := (B_i \cap V_1) \setminus (R \cup T_i \cup W_i)$.  Also define $S_i := B_i \cap S$, $U_i := B_i \cap (N(T_i) \setminus S)$, and $Z_i := (B_i \cap V_2) \setminus (S_i \cup U_i)$.  Note that $U_i$ only contain the extra neighbors of the $T_i$ vertices in $B_i$.  Therefore, every vertex of $U_i$ has a neighbor in $T_i$, and $Z_i$ may intersect with $N(T) \setminus U_i$.  See Figure~\ref{fig:rule3proof} for an illustration of all the sets for $i=1$ and $i = 2$.

Consider the biclustering $\B'$ obtained
from $\B$ as follows:
\begin{itemize}
\item 
        remove the vertices of $R \cup T \cup S \cup W_1$ from their respective biclusters;

        \item 
        add $R \cup T \cup S \cup W_1$ as a new bicluster.
\end{itemize}

\begin{figure}
    \centering
    \includegraphics[width=0.5\textwidth]{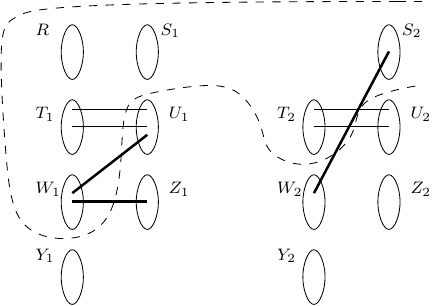}
    \caption{The bicluster introduced in $\B'$ (dashed shape).  Fat edges are deletions in $\B'$ not in $\B$, light edges represents the fact that each element of each $T_i$ has at most one neighbor in $U_i$, and they are all distinct.  Not shown: insertions between $W_1$ and $S \setminus S_1$.}
    \label{fig:rule3proof}
\end{figure}

We argue that $\B'$ is also optimal.  
Our strategy is as follows: each edge modification in $E(G) \triangle E(\B')$ will be \emph{mapped} to a distinct edge modification in $E(G) \triangle E(\B)$ (i.e., we provide an injection from $E(G) \triangle E(\B')$ to $E(G) \triangle E(\B)$).  This will show that $\B'$ does not require more modifications than $\B$.

For an insertion/deletion $xy$ that is present in both $E(G) \triangle E(\B)$ and $E(G) \triangle E(\B')$, we map $xy$ to $xy$.  It thus suffices to consider modifications that are in $E(G) \triangle E(\B')$ but not in $E(G) \triangle E(\B)$.
By Observation~\ref{obs:sumcost}, we may consider the edge modifications incident to vertices of $V_1$ and,  
moreover, it suffices to consider the vertices of $V_1$ whose neighborhood has changed from $\B$ to $\B'$, as the other vertices are only incident to modifications in both sets. 
To map those, let us first list some of the edge modifications that are in $E(G) \triangle E(\B)$ but not in $E(G) \triangle E(\B')$, which we divide into four groups:
\begin{enumerate}[label=(A\arabic*)]

    \item \label{grp:ruz}
    there are $|R|(|U_1| + |Z_1|)$ insertions between vertices of $R$ and $U_1 \cup Z_1$.  Recalling that $|R| > |W|$, this amounts to at least $(|W| + 1)(|U_1| + |Z_1|) \geq |W|(|U_1| + |Z_1|) + |U_1|$ insertions.  

    \item \label{grp:rs}
    there are $|R|(|S| - |S_1|)$ deletions between vertices of $R$ and $S$.  
    This amounts to at least $(|W| + 1)(|S| - |S_1|) \geq |W|(|S| - |S_1|) + |S| - |S_1|$ deletions.

    \item \label{grp:ts}
    for $i \in [l]$, there are $|T_i|(|S| - |S_i|)$ deletions between vertices of $T_i$ and $S$ (because for $t \in T_i$, $S \subseteq N(t)$ by the definition of a sister);

    \item \label{grp:tu}
    for $i \in [l]$, if $|U_i| \geq 1$ there are at least $|T_i|(|U_i| - 1)$ insertions between vertices of $T_i$ and $U_i$ (this is because by the definition of sister vertices, each $T_i$ has at most one neighbor in $U_i$, and thus each vertex of $T_i$ requires at least $|U_i| - 1$ insertions).

\end{enumerate}
Note that the list may include other modifications, but these will suffice for our purposes.
Now let us list \emph{all} the modifications in $E(G) \triangle E(\B')$ that are not in $E(G) \triangle E(\B)$, again focusing on the $V_1$ side, and describe our map for each of them.  
\begin{itemize}
    \item
    \emph{(modifications incident to $R$)}: 
    no such modified edge is incident to a vertex of $R$, because they require no modification at all in $\B'$;

    \item 
    \emph{(modifications incident to $Y_i$'s)}: 
    for $i \in [l]$, no such modified edge is incident to a vertex of $Y_i$, because such a vertex has no neighbor in $S_i$, and only $S_i$ was removed from $B_i$ to go from $\B$ to $\B'$;

    \item 
    \emph{(modifications incident to $W_1$)}: 
    $\B'$ requires at most $|W_1|(|U_1| + |Z_1|)$ deletions between vertices of $W_1$ and $U_1 \cup Z_1$ that were not in $E(G) \triangle E(\B)$.  We map those to distinct modifications of $E(G) \triangle E(\B)$ from group~\ref{grp:ruz}.  

    Notice that there remains at least $|U_1|$ modifications in group~\ref{grp:ruz} that we can still map to.

    There are also at most $|W_1|(|S| - |S_1|)$ insertions between $W_1$ and $S \setminus S_1$.  We map those in the next step.

    \item 
    \emph{(modifications incident to other $W_i$'s)}: 
    for $i \in \{2, \ldots, l\}$,
    there are at most $|W_i||S_i|$ deletions between vertices of $W_i$ and $S_i$.
    Summing over every $i$, this amounts to at most $\sum_{i=2}^l |W_i||S_i| \leq (|W| - |W_1|)\sum_{i=2}^l |S_i| = (|W| - |W_1|)(|S| - |S_1|)$ deletions.  
    By considering the yet unmapped $|W_1|(|S| - |S_1|)$ insertions from the previous step, there is currently a total of at most $|W|(|S| - |S_1|)$ modifications to map.
    We map those to distinct modifications from group~\ref{grp:rs}.  
    
    Notice that there remains at least $|S| - |S_1|$ modifications in~\ref{grp:rs} that we can still map to.

    \item 
    \emph{(modifications incident to $T_1$)}: recall that each vertex of $T_1$ has at most one neighbor in $U_1$ and that no two vertices of $T_1$ have the same neighbor in $U_1$, by the definition of sister vertices.  Therefore in $\B'$, 
    there are at most $|U_1|$ deletions between vertices of $T_1$ and $U_1$.  We map those to the remaining $|U_1|$ modifications from group~\ref{grp:ruz} that we can still map to.  No other modification affects $T_1$ since $S \subseteq N_G(T_1)$.

    \item 
    \emph{(modifications incident to other $T_i$'s)}: 
    for $i \in \{2, 3, \ldots, l\}$, there are at most $|U_i|$ deletions between vertices of $T_i$ and $U_i$.  If $|U_i| = 0$, then no deletions need to be accounted for. 
    
    If $|U_i| \geq 2$, first observe that $|T_i| \geq |U_i|$.  Indeed, as $U_i = B_i \cap (N(T_i) \setminus S)$, it is a subset of the neighborhood of $T_i$, and therefore each element of $U_i$ has at least one neighbor in $T_i$. 
    Moreover, we have seen that each element of $T_i$ has at most one neighbor in $U_i$.  Thus, each vertex of $T_i$ contributes to adding at most one vertex in $U_i$, from which it follows that $|T_i| \geq |U_i|$.
    We can map the at most $|U_i|$ deletions between $T_i$ and $U_i$ to $|T_i|(|U_i|-1) \geq |T_i| \geq |U_i|$ distinct insertions from group~\ref{grp:tu}.

    Finally, assume that $|U_i| = 1$.  By the definition of a sister, only one vertex of $T_i$ has the element of $U_i$ in its neighborhood (otherwise $T_i$ would have twins). 
 Therefore, we have only a single deletion to map.  If $S_i \neq S$, map the deletion to one of the $|T_i|(|S| - |S_i|) \geq 1$ deletions from group~\ref{grp:ts} (here we may assume $|S| > 0$ as otherwise $R$ has no neighbors and $T$ would be empty).
    If instead $S_i = S$, then map the deletion to one of remaining $|S| - |S_1|$ modifications from group~\ref{grp:rs} (this is possible since $|S| - |S_1| = |S| \geq 1$, and this is used only once, for the $i$ when $S_i = S$, so the map remains injective).

\end{itemize}

Since all vertices of $V_1$ belong to one of the $R, T_i, W_i$, or $Y_i$ sets, we have mapped the edge modifications incident to all vertices of $V_1$.
We have thus provided an injective map from $E(G) \triangle E(\B')$ to $E(G) \triangle E(\B)$, implying that the former is not larger than the latter.  Therefore, $\B'$ is also optimal.  Moreover, one of its clusters is $R \cup T \cup S \cup W_1$, implying that all sisters of $R$ have undergone a deletion to their vertex outside of $S$.

To conclude the argument, note that $\B'$ can be obtained by first taking any $u \in T$, deleting its incident edge shared with the vertex $v$ not in $S$, and then applying at most $k - 1$ more modifications to obtain $\B'$.  Thus $G - uv$ can be transformed into $\B'$ with cost at most $k - 1$.
\qed
\end{proof}

We now have enough ingredients to devise our small kernel.  

\begin{theorem}\label{thm:5k}
Let $G$ be a graph on which Rules 1,2 and 3 do not apply, and suppose that there exists a biclustering of $G$ of cost at most $k$.  Then $G$ has at most $4.5k$ vertices.
\end{theorem}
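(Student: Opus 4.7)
My plan is to fix an optimal biclustering $\B$ of $G$ with cost at most $k$, using Lemma~\ref{lem:twins-in-same} to arrange that every twin class of $G$ sits entirely inside a single bicluster. Summing Observation~\ref{obs:sumcost} over both sides of the bipartition gives $\sum_{u \in V(G)} e_u(G,\B) = 2k$. Partitioning $V(G)$ into $V^{\geq 1} = \{u : e_u \geq 1\}$ and $V^0 = \{u : e_u = 0\}$ yields immediately $|V^{\geq 1}| \leq 2k$, and the whole proof reduces to showing $|V^0| \leq (5/2)k$, after which $|V(G)| = |V^0| + |V^{\geq 1}| \leq 9k/2$.

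For each bicluster $B = X \cup Y$ of $\B$, define $X^0 := \{x \in X : N_G(x) = Y\}$ and define $Y^0$ symmetrically. By Lemma~\ref{lem:twins-in-same}, every nonempty $X^0$ is a complete twin class of $G$. Rule~2 then gives $|X^0| \leq |N_G(Y) \setminus X^0|$, and every witness $v \in N_G(Y) \setminus X^0$ lies in $V^{\geq 1}$: if $v \in X \setminus X^0$ then $N_G(v) \neq Y$ and $e_v \geq 1$, while if $v \notin X$ then the edge from $v$ to its neighbor in $Y$ is a deletion. Summing these inequalities across all biclusters and accounting for the multiplicity with which a single $v \in V^{\geq 1}$ can serve as a witness (bounded by $1$ plus the number of deletions incident to $v$) already yields $|V^0| \leq |V^{\geq 1}| + 2k \leq 4k$, which is exactly the $6k$ bound one can achieve from Rules~1 and~2 alone.

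The refinement to $|V^0| \leq (5/2)k$ comes from Rule~3. When $X^0$ has a nonempty sister set $T$, the bound tightens to $|X^0| \leq |N_G(Y) \setminus (X^0 \cup T)|$, saving $|T|$ per bicluster, and I would pay for this saving using the cost that sisters themselves generate: a sister $t \in X$ has $e_t = 1$ (the single deletion to its extra neighbor), whereas a sister $t \notin X$ has $Y \subseteq N_G(t)$ and therefore forces at least $|Y|$ external deletions incident to $Y$, all of which are charged to $c_B$. When $T = \emptyset$, the definition of sister (singleton twin class plus exactly one extra neighbor) combined with Rule~2 obliges each remaining witness either to have $e_v \geq 2$ or to sit in a non-singleton twin class that can be charged collectively, again tightening the ratio. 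The main obstacle I expect is the global bookkeeping: a single $v \in V^{\geq 1}$ can simultaneously serve as a witness for many biclusters and as a sister of several twin classes, so producing the exact $5/4$ ratio between $|V^0|$ and $|V^{\geq 1}|$ requires distributing the $e_v$ units of charge among these multiple roles without double counting.
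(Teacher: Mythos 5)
Your first half is sound: fixing a twin-respecting optimal biclustering, the observations that $|V^{\geq 1}|\leq 2k$, that each nonempty $X^0$ is a full twin class so Rule~2 gives $|X^0|\leq |N_G(Y)\setminus X^0|$, that every witness lies in $V^{\geq 1}$, and that witness multiplicity is at most $1$ plus incident deletions, do combine to give the $6k$ bound. But the route you propose for the improvement cannot be completed, because your intermediate target $|V^0|\leq \tfrac{5}{2}k$ is simply false on reduced instances. Take the paper's tight example (Figure~\ref{fig:tight}): a $P_6$ with a twin added to each $V_1$-vertex, $9$ vertices, $k=2$, and no rule applicable. Its (essentially unique) optimal biclustering deletes the two edges $ch$ and $dh$, so only three vertices are touched by modifications and $|V^0|=6>5=\tfrac{5}{2}k$. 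The decomposition $|V(G)|=|V^0|+|V^{\geq 1}|$ with the two pieces bounded \emph{separately} by $\tfrac{5}{2}k$ and $2k$ therefore cannot work: in the extremal instances the two quantities trade off against each other ($|V^{\geq 1}|=3<2k$ there), and any correct proof must couple them. This is exactly what the paper's charging scheme does: every vertex, modified or not, distributes one unit of charge over the modified edges incident to a designated modified vertex $f(u)$, and the bound $4.5$ is proved \emph{per edge}, allowing an edge like $ch$ to absorb charge $3$ from the $V_1$ side (from $c$ and two unmodified vertices routed through it) plus $3/2$ from the $V_2$ side. A global count of $|V^0|$ cannot see this uneven exchange rate.

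Beyond that structural obstacle, the part of your argument that is supposed to deliver the improvement is only a sketch: you state how Rule~3 tightens the witness inequality when $T\neq\emptyset$ and gesture at charging sisters' deletions, but you yourself flag that the bookkeeping for a vertex serving simultaneously as witness for several biclusters and as sister of several twin classes is unresolved, and no injective or fractional assignment is actually exhibited. So even after repairing the false intermediate bound, the core of the $4.5k$ argument is missing. I would suggest abandoning the two-part decomposition and instead proving a per-modified-edge charge bound, which is where the constant $4.5$ genuinely comes from (it arises as $3+\tfrac{3}{2}$ in the deletion--deletion case).
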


\begin{proof}
    
Let us first make an observation. 
Let $R$ be a twin class of $G$, let $T$ be the sisters of $R$ and let $W := N(N(R)) \setminus (R \cup T)$.
Since Rules 1,2 and 3 do not apply, we know that $|R| \leq |W|$.
Indeed, suppose that $|R| > |W|$.
If $|T| \geq 1$, then Rule 3 applies, so assume $T = \emptyset$.
In that case, $W = N(N(R)) \setminus R$, but then, Rule 2 applies.  We will thus assume that $|R| \leq |W|$ for each twin class $R$.

Let $\B$ be an optimal biclustering of $G$ in which vertices in the same twin class of $G$ belong to the same bicluster, which exists by Lemma~\ref{lem:twins-in-same}.  Let $E_I := E(\B) \setminus E(G)$ be the set of inserted edges and let $E_D := E(G) \setminus E(\B)$ be the set of deleted edges. Note that by assumption, $|E_I \cup E_D| \leq k$.  Let $X$ be the set of vertices of $G$ that are incident to at least one edge of $E_I \cup E_D$.
Notice that for a twin class $R$ of $G$, either none or all vertices of $R$ are in $X$ (because if one member of $R$ is incident to a modified edge, we must apply the same modification to all members of $R$).
Let $\{R_1, \ldots, R_p\}$ be the set of twin classes of $G$ that are not contained in $X$.  Thus each $R_i$ is unmodified (i.e. not incident to a modified edge), and therefore each member of each $R_i$ has the same neighborhood in $\B$ as in $G$.   
For each $R_i$, let $S_i := N_G(R_i)$,  
let $T_i$ be the sisters of $R_i$ and $W_i := N_G(S_i) \setminus (R_i \cup T_i)$.
Moreover, let $B_i$ be the set of vertices in the bicluster of $\B$ that $R_i$ belongs to (note that $T_i$ and $W_i$ are not necessarily contained in $B_i$).
Since for each $i \in [p]$, $R_i$ is unmodified, we have $S_i \subseteq B_i$ and, since each $W_i \cup T_i$ vertex has neighbors in $S_i$ but a different neighborhood than the $R_i$ vertices, we have $W_i \cup T_i \subseteq X$.  

We first argue that no two distinct $R_i, R_j$ classes can be in the same bicluster.

\begin{nclaim}\label{claim:basickernel}
    Let  $a \in \{1,2\}$  and let $R_i, R_j \subseteq V_a$, where $i, j \in [p]$ and $i \neq j$.  
    Then $B_i \neq B_j$ and $S_i \cap S_j = \emptyset$.
\end{nclaim}

\begin{proof}
    If $B_i = B_j$ was true, because $R_i$ and $R_j$ are unmodified and are assumed to be in the same $V_a$, we would have $N_G(R_i) = N_G(R_j)$ and $R_i \cup R_j$ would be a larger twin class.  Thus $B_i \neq B_j$.  Moreover, $S_i \subseteq B_i$ and $S_j \subseteq B_j$ implies $S_i \cap S_j = \emptyset$.
    \qed
\end{proof}

Observe that although the $R_i$'s and $S_i$'s on the same side do not intersect, it is possible that the $T_i$'s and $W_i$'s on the same side do intersect.

We bound the  size of $V := V(G)$ by devising a charging scheme in which each vertex of $V$ charges a total amount of $1$ to edges of $E_I \cup E_D$.  The main idea is that every modified edge receives a total charge of at most $4.5$ in this scheme. 
To this end, first define a function $f : V(G)  \rightarrow X$ as follows:
\begin{itemize}
    \item 
    For each $x \in X$, put $f(x) = x$. 

    \item 
    For each $R_i$, $i \in [p]$, map each $r \in R_i$ to some $w \in W_i$ in an arbitrarily injective manner.  More precisely, define $f$ so that $f(r) \in W_i$ for each $r \in R_i$, and such that $f(r) \neq f(r')$ for each distinct $r, r' \in R_i$.  Note that this is possible since $|R_i| \leq |W_i|$.
\end{itemize}
Note that $f$ is not necessarily injective, but $f$ restricted to any $R_i$ is injective.  
Also note that because each $R_i$ is on the same side of the bipartition as $W_i$, $f(u)$ is always on the same side as $u$.
We will often use the notation $I_f(x) = \{u : f(u) = x\}$ for the inverse of $f$.
See Figure~\ref{fig:fmap}.

Now define the following charging scheme: for $x \in X$, let $M(x) \subseteq E_I \cup E_D$ be the set of modified edges incident to $x$.
Then each $u \in V(G)$ charges an amount of $1/|M(f(u))|$ to each edge of $M(f(u))$.  In other words, $u$ spreads a total charge of $1$ across all modified edges incident to $f(u)$. 
Note that for a modified edge $xy$, the total charge received by the edge is $|I_f(x)|/|M(x)| + |I_f(y)|/|M(y)|$.

\begin{figure}
    \centering
    \includegraphics[width=.3\textwidth]{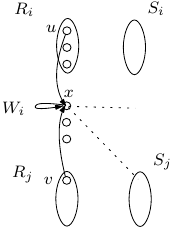}
    \caption{An illustration of the $f$ map.  Here, $x \in X$ and $I_f(x) = \{u, x, v\}$.  No vertex of $R_i$ other than $u$ maps to $x$, and no vertex of $R_j$ other than $v$ maps to $x$.  Dotted lines indicate modified edges incident to $x$.  Each of $u, x,$ and $v$ charges $1/2$ to each dotted edge.}
    \label{fig:fmap}
\end{figure}

\begin{nclaim}\label{claim:kerneldels}
    Let $x \in X$.  Then the following holds: 
    \begin{enumerate}
        \item 
        the elements of $I_f(x) \setminus \{x\}$ are all in different biclusters of $\B$.  

        \item 
        for each bicluster $B$ of $\B$ that does not contain $x$ but that intersects with $I_f(x) \setminus \{x\}$, there is a deleted edge incident to $x$ whose other endpoint is in $B$.

        \item 
        there are at least $|I_f(x)| 
 - 2$ deleted edges that are incident to $x$.
    \end{enumerate}
\end{nclaim}

\begin{proof}
    Let $v_1, v_2, \ldots, v_h$ be the vertices in $I_f(x) \setminus \{x\}$.  Notice that none of these vertices are in $X$, as otherwise they would map to themselves.  Hence they are all in some unmodified twin classes.  
    For distinct $i, j \in [h]$, let $R_{a_i}$ and $R_{a_j}$ be the twin classes that contain $v_i$ and $v_j$, respectively. 
    The fact that $f(v_i) = f(v_j) = x$ implies that $v_i$ and $v_j$ are on the same side of the bipartition of $G$, and so are $R_{a_i}$ and $R_{a_j}$. 
    Morever, $R_{a_i} \neq R_{a_j}$, since $f$ restricted to $R_{a_i}$ or $R_{a_j}$ is injective.  Then Claim~\ref{claim:basickernel} implies that $B_{a_i} \neq B_{a_j}$, which proves our first point. 
    
    For the second part, suppose that some bicluster does not contain $x$ but contains some element $u$ such that $f(u) = x$.  Since $u$ is from some unmodified twin class, we must have $u \in R_i$ for some $i \in [p]$.  Hence $u \in B_i$ and $x \notin B_i$.  By the construction of $f$, we have $x \in W_i$, which means that $x$ has a neighbor in $S_{i}$. Since $S_{i}$ is contained in $B_{i}$ and $x \notin B_{i}$, there must be at least one deletion between $x$ and $S_{i}$.  

    The third part is a consequence of the two previous.  By the first point, at most two elements of $I_f(x)$ are in the same bicluster as $x$, namely $x$ and some other vertex.  The other vertices are all in different biclusters, and by the second point each require at least one distinct deletion incident to $x$.
    \qed
\end{proof}

\begin{nclaim}
    Let $xy \in E_I \cup E_D$ such that at least one of $x$ or $y$ is incident to an inserted edge.  Then $xy$ receives a total charge of at most $4.5$.
\end{nclaim}

\begin{proof}
Suppose without loss of generality that $x \in V_1, y \in V_2$, and that $x$ is incident to an inserted edge.
By Claim~\ref{claim:kerneldels}, $x$ is incident to at least $|I_f(x)| - 2$ deleted edges and, counting the insertion, to $|M(x)| \geq |I_f(x)| - 1$ modified edges.  

Suppose first that $xy$ is itself an inserted edge.
If $|I_f(x)| = 1$, then only $x$ sends charge to $xy$ on the $V_1$ side, and so the $V_1$ side sends a total charge of at most $1$ to $xy$.  If $|I_f(x)| \geq 2$, the total charge sent to $xy$ by the $V_1$ side is 
 \[
    \dfrac{|I_f(x)|}{|M(x)|} \leq \dfrac{|I_f(x)|}{|I_f(x)| - 1} \leq 2.
\]
Since $y$ is also incident to an inserted edge, by the same argument, the total charge sent to $y$ by the $V_2$ side is at most $2$, and thus $xy$ receives a charge of $4$ or less.

So suppose that $xy$ is a deleted edge.
Then $x$ is incident to at least two modified edges (and also to at least $|I_f(x)| - 1$ modified edges as before).  If $|I_f(x)| \leq 2$, this means that $V_1$ sends a charge of at most $1$ to $xy$.  Otherwise if $|I_f(x)| \geq 3$, this total charge is at most $|I_f(x)|/(|I_f(x)| - 1) \leq 3/2$.
As for the $V_2$ side, if $|I_f(y)| \leq 2$, then it charges at most $2$ to $xy$ and $3/2 + 2 < 4.5$.  Otherwise, with $|I_f(y)| \geq 3$, 
using Claim~\ref{claim:kerneldels} (third part) to upper bound the charge of the $V_2$ side, the total charge sent to $xy$ is at most 
\[
\frac{3}{2} + \dfrac{|I_f(y)|}{|I_f(y)| - 2} \leq \frac{3}{2} + 3 = 4.5.
\]
    \qed
\end{proof}

We now restrict our attention to vertices of $X$ that are only incident to edges of $E_D$.
For the remainder of the proof, let $xy \in E_D$ such that none of $x$ or $y$ is incident to an inserted edge, with $x \in V_1$ and $y \in V_2$.  
Observe that if $|I_f(x)| \leq 2$, then $xy$ receives a charge of at most $2$ from the $V_1$ side.  Otherwise if $|I_f(x)| \geq 3$, then $V_1$ sends a charge of at most $|I_f(x)|/(|I_f(x)| - 2)$.  This quantity is $3$ if $|I_f(x)| = 3$, and at most $2$ if $|I_f(x)| \geq 4$.  The same observations hold for $y$ and the $V_2$ side.  It follows that if none of $|I_f(x)|$ or $|I_f(y)|$ is $3$, the total charge is at most $4$.
Let us assume, without loss of generality, that $|I_f(x)| = 3$.  
If $x$ is incident to at least two deletions, then the total charge sent by the vertices of $I_f(x)$ is at most $1/2 \cdot 3 = 3/2$.  Given that the charge from the $V_2$ side is at most $3$, the total is at most $4.5$.  

We can thus assume that $x$ is incident to only $xy$ as a deleted edge and that $|I_f(x)| = 3$.  Let $\{x, u, v\} = I_f(x)$.  By Claim~\ref{claim:kerneldels}, $u$ and $v$ are in distinct biclusters and by the same claim, one of $u$ or $v$, say $u$, is in the same bicluster  as $x$ (otherwise we would need two deletions incident to $x$).  Let $R_i$ be the twin class containing $u$.  Now, $x$ is in the same bicluster as $S_i$ and is incident to no inserted edge, which means that $S_i \subseteq N_G(x)$.  Moreover, $N_G(x) \setminus S_i$ has only one element, namely $y$, as otherwise more deletions would be required.  We also have $x \in W_i$, which means that $x$ is not a sister of $u$.  This is only possible if $x$ has a twin $x'$.  Since $\B$ keeps twins together, $y$ must be incident to the two deleted edges $xy$ and $x'y$.

Let us now bound the charge sent to $xy$ by the $V_2$ side.  If $|I_f(y)| \leq 2$, the total charge sent to $xy$ by the $V_2$ side is at most $1$ since $y$ is incident to at least two deletions.  With the charge of $3$ from the $V_1$ side, we get a charge of at most $4$ for $xy$.  So we will assume $|I_f(y)| \geq 3$.  Let $v_1, \ldots, v_l$ be the members of $I_f(y) \setminus \{y\}$, and let $B_{a_1}, \ldots, B_{a_l}$ be the biclusters that respectively contain them.  By Claim~\ref{claim:kerneldels}, 
$y$ is incident to at least $|I_f(y)| - 2$ deletions, each with an endpoint in a different bicluster among $B_{a_1}, \ldots, B_{a_l}$.  Moreover, since $x$ and $x'$ are in the same bicluster $B_i$, we know that two deletions are between $y$ and $B_i$, which means that $y$ is incident to at least $|I_f(y)| - 1$ deletions.
Assuming $|I_f(y)| \geq 3$, the quantity $|I_f(y)|/(|I_f(y)| - 1)$ is at most $3/2$.  Adding this to the charge of $3$ from the $V_1$ side, the total charge is again $4.5$.

We have thus argued that any of the $k$ edges in $E_I \cup E_D$ receives a charge of at most $4.5$.
Since each vertex of $G$ sends a charge of $1$ and the total charge is at most $4.5k$, it follows that the number of vertices is at most $4.5k$. 
\qed 
\end{proof}

One can also show that the above analysis is tight.  Figure~\ref{fig:tight} is an example for $k = 2$ with $9 = 4.5k$ vertices.  It is the graph obtained by starting with a $P_6$, then adding a twin for every vertex on the $V_1$ side.

\begin{figure}
    \centering
    \includegraphics[width=0.15\textwidth]{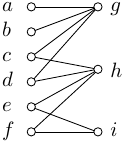}
    \caption{A graph of size $4.5k$, with $k = 2$, on which none of the reduction rules apply.}
    \label{fig:tight}
\end{figure}

\begin{theorem}
    For any even $k \geq 2$, there exist graphs on which Rules 1,2, and 3 are not applicable, whose optimal biclustering requires $k$ edge modifications and that contain $4.5k$ vertices.
\end{theorem}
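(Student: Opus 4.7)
The plan is to use the graph $H$ shown in Figure~\ref{fig:tight}, namely the bipartite graph obtained from a $P_6$ on vertices $v_1, v_2, v_3, v_4, v_5, v_6$ (with $v_1, v_3, v_5 \in V_1$ and $v_2, v_4, v_6 \in V_2$) by adding a twin $v_i'$ for each $v_i$ on the $V_1$ side, so that $|V(H)|=9$. For even $k \geq 2$, I take the disjoint union $G$ of $k/2$ copies of $H$, which has $9 \cdot (k/2) = 4.5k$ vertices. Since twin classes, neighborhoods, sisters, and optimal biclusterings all decompose across connected components, it suffices to establish the required properties on a single copy of $H$.

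First I would verify that none of Rules~1--3 applies. Rule~1 fails because $H$ is a single connected component that is not a bicluster (e.g., $v_1 v_2 v_3 v_4$ is an induced $P_4$). The twin classes of $H$ are $\{v_1, v_1'\}$, $\{v_3, v_3'\}$, $\{v_5, v_5'\}$, $\{v_2\}$, $\{v_4\}$, $\{v_6\}$. For each such $R$ I compute $S = N(R)$, the sister set $T$, and $W = N(S) \setminus (R \cup T)$, and check $|R| \leq |W|$, ruling out Rule~2. The decisive point is that every class has $T = \emptyset$: on the $V_1$ side, each candidate vertex already has a twin in $H$; on the $V_2$ side, the candidates $v_2, v_4, v_6$ have neighborhoods that differ from the relevant $S$ by more than a single extra vertex, violating the sister condition. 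This makes Rule~3 vacuous as well.

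Next I would argue that $H$ admits an optimal biclustering of cost exactly $2$. For the upper bound, the partition into biclusters $\{v_1, v_1', v_2, v_3, v_3'\}$ and $\{v_4, v_5, v_5', v_6\}$ is realised by the two deletions $v_3 v_4$ and $v_3' v_4$. For the matching lower bound I would focus on the two edge-disjoint induced $P_4$s $P = v_1 v_2 v_3 v_4$ and $P' = v_1' v_2 v_3' v_4$ and show that no single edge modification destroys both. Since insertions within $V_1$ or $V_2$ are forbidden, the only chord that could turn $P$ into a non-induced $P_4$ is $v_1 v_4$, and similarly the only chord for $P'$ is $v_1' v_4$. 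A single deletion removes an edge of at most one of $P, P'$ because they are edge-disjoint, and a single insertion can add a chord for at most one of them, so at least two modifications are required. Summing over the $k/2$ disjoint copies yields an optimum of exactly $k$.

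The main obstacle is the lower bound of $2$ modifications per copy; the upper bound and the Rule~1--3 verifications reduce to straightforward inspection. The non-triviality lies in ruling out a single clever insertion, and the twin vertices $v_1'$ and $v_3'$ were padded onto $P_6$ precisely so that any single chord still leaves a parallel induced $P_4$ going through the surviving twins.
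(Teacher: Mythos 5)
Your proposal is correct and uses the same construction as the paper (the $P_6$ with a twin added for each $V_1$-vertex, then $k/2$ disjoint copies), with the same componentwise verification that Rules 1--3 fail; the only genuine difference is the lower bound on the editing cost. The paper gets the bound of $2$ per copy from Lemma~\ref{lem:twins-in-same}: any modification has an endpoint in $V_1$, every $V_1$-vertex has a twin, and since twins can be assumed to lie in the same bicluster they are incident to identical (hence paired, distinct) modifications. You instead exhibit two induced $P_4$'s, namely $v_1v_2v_3v_4$ and $v_1'v_2v_3'v_4$, whose sets of modifiable vertex pairs are disjoint (the paths are edge-disjoint and their only insertable chords, $v_1v_4$ and $v_1'v_4$, differ), so a single edit leaves one of them induced and the graph is not $P_4$-free. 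Both arguments are sound; yours is more elementary and self-contained (it does not invoke the twins lemma), while the paper's is shorter and reuses machinery already established. One small point worth tightening in your write-up: for the $V_2$-side sister check, the candidates fail not merely because their neighborhoods differ from $S$ by more than one vertex, but in two of the three cases because they do not even contain $S$; and for the disjoint union you should note explicitly that a single edit cannot destroy induced $P_4$'s in two different copies, which follows from the same pair-disjointness observation.
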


\begin{proof}
    Consider $k = 2$ and the graph in Figure~\ref{fig:tight}.  The vertices on the left side can be partitioned into twin classes $\{a, b\}, \{c, d\}$, and $\{e,f\}$, and those on the right side into $\{g\}, \{h\}, \{i\}$.  Since we may assume that twins undergo the same modifications, we see that at least $k = 2$ modifications are needed, since at least one twin class from the left side needs to be incident to an edge modification.
    We also see that a biclustering can be obtained with the $2$ deletions $ch$ and $dh$.
    Rule 1 obviously does not apply.  Rule 2 does not either, since no twin class $R$ has size greater than $N(N(R))$.  Finally, Rule 3 does not apply because, on the left side, any vertex has a twin and thus cannot be a sister, and on the right side, any two vertices have at least two neighbors in their symmetric difference.
    Finally, since the graph has $9$ vertices and $k = 2$, this is a kernel with $4.5k$ vertices.  For larger $k$, we can take the disjoint union of $k/2$ copies of this graph.
    \qed
\end{proof}

\section{An $O^*(2.581^k)$ branching algorithm}

Before presenting our main branching algorithm, we introduce some notation that is specific to this section.
A \emph{branching vector} is a sequence $B = (b_1, \ldots, b_t)$ of positive integers (not necessarily sorted, and with possible repetition).  The \emph{branching factor} of $B$ is the largest real root of the polynomial $f(x) = x^{k} - \sum_{i=1}^t x^{k - b_i}$, where $k$ is the parameter of interest.  We call $f(x)$ the \emph{characteristic polynomial} of $B$.  A \emph{branching algorithm} is a recursive algorithm in which every recursive call runs in polynomial time, and that either solves its given instance, or branches into recursive  calls that each reduce the value of the parameter $k$ by at least $1$.  The reduction in $k$ of each branch results in a branching vector, and it is well known that if $a$ is the supremum of the branching factors of all the branching vectors encountered during the execution, then the algorithm runs in time $O^*(a^k)$, where again $O^*$ hides polynomial factors (see e.g.~\cite[Chapter 2]{Fomin2010}).
We say that a branching vector $B = (b_1, \ldots, b_t)$ is \emph{better} than a branching vector $B' = (b'_1, \ldots, b'_s)$ if there is an injective map $\sigma : [t] \rightarrow [s]$ such that $b_i \leq b'_{\sigma(i)}$ for each $i \in [t]$.  For example, $(1, 2, 3, 3, 4)$ is better than $(1, 2, 2, 3, 3, 4)$.    Note that if $B$ is better than $B'$, then the branching factor of $B$ is smaller than or equal to that of $B'$.  On the other hand, if the branching factor of $B$ is smaller than that of $B'$, it does not necessarily imply that $B$ is better than $B'$.

A central concept in our algorithm are conflicting vertices.

\begin{definition}
    Let $G = (V_1 \cup V_2, E)$ be a bipartite graph and let $u, v \in V_i$ for some $i \in \{1, 2\}$.  We say that $u$ and $v$ are \emph{in conflict} if $N(u) \cap N(v) \neq \emptyset$ and $N(u) \triangle N(v) \neq \emptyset$.
\end{definition}

Note that a graph $G$ is a bicluster graph if and only if if does not contain two vertices that are in conflict.  We also have the following.

\begin{lemma}\label{lem:connected-component-bicluster}
Let $G = (V_1 \cup V_2, E)$ be a bipartite graph and let $u \in V_i$.  Suppose that $u$ is not in conflict with any vertex of $G$.  Then the connected component that contains $u$ is a bicluster.
\end{lemma}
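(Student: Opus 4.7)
The plan is to unwind the definition of conflict and show that the connected component of $u$ has the simple structure of a complete bipartite graph between the twin class of $u$ and its neighborhood.

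Without loss of generality assume $u \in V_1$. If $N(u) = \emptyset$, then the connected component of $u$ is just $\{u\}$, which is a bicluster by convention (isolated vertices are biclusters, as noted in the preliminaries). So assume $N(u) \neq \emptyset$ and set $S := N(u)$ and $T := \{w \in V_1 : N(w) = S\}$; clearly $u \in T$. I will show that $T \cup S$ is exactly the connected component of $u$ and that it induces a bicluster.

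The key observation is the following: for any $s \in S$ and any $w \in N(s)$, we have $s \in N(u) \cap N(w)$, so $N(u) \cap N(w) \neq \emptyset$. Since $u$ and $w$ lie on the same side $V_1$ and are assumed not to be in conflict, the only possibility is $N(u) \triangle N(w) = \emptyset$, i.e.\ $N(w) = S$, which means $w \in T$. Hence $N(s) \subseteq T$ for every $s \in S$. Conversely, every vertex $w \in T$ satisfies $N(w) = S$ by definition, so $N(T) \subseteq S$. Therefore $T \cup S$ is closed under taking neighborhoods in $G$, which (together with connectivity of $T \cup S$ via $u$) shows that $T \cup S$ is the connected component containing $u$.

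It remains to verify the bicluster property. Every $w \in T$ has $N(w) = S$ by construction. For every $s \in S$, we have just shown $N(s) \subseteq T$; and conversely, each $w \in T$ has $s \in N(w) = S$ in its neighborhood, which means $w \in N(s)$, so $T \subseteq N(s)$. Thus $N(s) = T$ for every $s \in S$. This is precisely the definition of $T \cup S$ being a bicluster. The only subtlety worth double-checking is that the definition of conflict is restricted to pairs on the same side of the bipartition, so the hypothesis really does only give us information about pairs $(u, w)$ with $w \in V_1$; but the argument above only ever invokes conflict-freeness between $u$ and such vertices $w$, so nothing more is needed.
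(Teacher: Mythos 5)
Your proof is correct and follows essentially the same route as the paper's: both hinge on the observation that any $w \in N(N(u))$ shares a neighbor with $u$ and hence, being conflict-free with $u$, must satisfy $N(w) = N(u)$, so that $N(N(R_u)) = R_u$. You simply spell out the final verification that this forces the component to be a bicluster, which the paper leaves implicit.
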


\begin{proof}
Suppose $u \in V_1$, without loss of generality.   Let $R_u$ be the twin class that contains $u$.  If $N(u)$ or $N(N(u))$ is empty, the lemma easily holds.
Otherwise, let $v \in N(N(u))$.  Then $N(v) \cap N(u) \neq \emptyset$.  Since $u$ has no conflict, we have $N(u) \triangle N(v) = \emptyset$.  Therefore, $N(u) = N(v)$ and $v \in R_u$.  
In other words, $N(N(R_u)) = R_u$, from which the statement follows.
\qed
\end{proof}

The idea of our algorithm is to first remove connected components that are biclusters (by Rule 1).  Then we consider any two vertices $u$ and $v$ that are in conflict.  Note that by Lemma~\ref{lem:connected-component-bicluster}, we can choose any $u \in V(G)$ and we will find some $v$ it is in conflict with.
We then observe that in any biclustering of $G$, either $u$ and $v$ belong to the same bicluster, or they do not.  We branch on these two possibilities.
If we fix the choice that $u, v$ are in the same bicluster, we need to do something about  $N(u) \triangle N(v)$, and we can try every way of ensuring that $u$ and $v$ have the same neighbors.  
If we fix the choice that $u, v$ are in different biclusters, we need to do something about $N(u) \cap N(v)$, and this time we ensure that $u$ and $v$ share no common neighbor.   See Figure~\ref{fig:alg1}.

For technical reasons, in Algorithm~\ref{alg:mainbranching} we bound the number of elements of $N(u) \triangle N(v)$ and $N(u) \cap N(v)$ that we branch on by a constant $100$.
This is because if, say, $|N(u) \triangle N(v)|$ was a function of $k$, branching over every way of enforcing $u$ and $v$ to have the same neighbors could take time that is superpolynomial with respect to $k$, and we would perform the same large number of recursive calls.  
Although it can be argued that this is not problematic because the decrease in $k$ compensates sufficiently, restricting to a constant does not alter the worst-case scenario.  More importantly, it allows us to state that each recursive call takes polynomial time, which simplifies the analysis.
In the algorithm, the choice of $C$ and $D$ is arbitrary.

\begin{algorithm}[h]
\DontPrintSemicolon
\SetKwProg{Fn}{function}{}{}
\Fn{biclusterize($G, k$)}{
	\lIf{$k < 0$}
	{
	    Report ``NO'' and return
	}
	Remove from $G$ all bicluster connected components (Rule 1)\;
	\lIf{$G$ has no vertex}
	{
	    Report ``YES'' and return
	}
	\lIf{$G$ has maximum degree $2$}
	{
	    Solve $G$ in polynomial time
	}
	\;

	Let $u, v \in V(G)$ such that $|N(u)| \geq 3$, and such that $N(u) \cap N(v) \neq \emptyset$ and  $N(u) \triangle N(v) \neq \emptyset$ \; 
	Let $R_u$ be the twin class of $u$ and $R_v$ be the twin class of $v$\;
	/*Put $u, v$ in the same bicluster*/\;
        Let $D \subseteq N(u) \triangle N(v)$ of size $\max(100, |N(u) \triangle N(v)|)$\;
	\For{each subset $Z$ of $D$} 
	{
	    Obtain $G'$ from $G$ by: \hspace{200mm} 
	    inserting all missing edges
	    between $R_u \cup R_v$ and $Z$ and \hspace{200mm} 
	    deleting all  edges between $R_u \cup R_v$ and $D \setminus Z$\;
	    Let $h$ be the number of edges modified from $G$ to $G'$\;
	    $biclusterize(G', k - h)$\;
	}
	/*Put $u, v$ in different biclusters*/\;
        Let $C \subseteq N(u) \cap N(v)$ of size $\max(100, |N(u) \cap N(v)|)$\;
	\For{each subset $Z$ of $C$}
	{
	    Obtain $G'$ from $G$ by: \hspace{200mm} 
	    deleting all edges between $R_u$ and $C \setminus Z$ and \hspace{200mm} 
	    deleting all edges between $R_v$ and $Z$\;
	    Let $h$ be the number of edges modified from $G$ to $G'$\;
	    $biclusterize(G', k - h)$\;
	}
	\lIf{some recursive call reported ``YES''}{report ``YES''} \lElse{report ``NO''}
  }
  \caption{Main branching algorithm}
  \label{alg:mainbranching}
\end{algorithm}

The pseudo-code of this approach is presented in Algorithm~\ref{alg:mainbranching}.
As we will show, it is sufficient to achieve time $O^*(3.237^k)$ with an easy analysis.  

\begin{figure}[t]
    \centering
    \includegraphics[width=.65\textwidth]{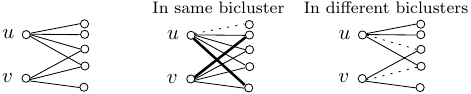}
    \caption{Left: two vertices $u, v$ and $N(u) \cup N(v)$.  Middle: one of the $8$ ways to branch into if $u$ and $v$ are in the same bicluster.  Right: one of the $4$ ways to branch into if $u$ and $v$ are in distinct biclusters.  Dotted edges are deletions, fat edges are insertions.}
    \label{fig:alg1}
\end{figure}

Importantly, every time the algorithm applies $h$ modifications according to some choice of $Z$, we observe that $h \geq |D|$ in the first loop of branching because each $z \in Z$ requires an insertion and each $z \notin Z$ requires a deletion (non-equality is possible when one of the twin classes has more than one member).  Similarly, $h \geq |C|$ in the second loop, because each element requires a deletion.  

Notice that even if Algorithm~\ref{alg:mainbranching} might branch into an exponential number of cases (with respect to the sizes of $C$ and $D$), the larger the set of branching cases, the more $k$ is reduced in each recursive call.  It turns out to be more advantageous to have more cases: the larger $N(u) \cap N(v)$ and $N(u) \triangle N(v)$ are, the closer to an $O(2^k)$ algorithm we get.  The graphs of small maximum degree are the most problematic.   

To analyze this more formally, let $u, v \in V_i$ and let $c := |C|$ and $d := |D|$, where  $C \subseteq N(u) \cap N(v)$ and $D \subseteq N(u) \triangle N(v)$ are the sets used by the algorithm for the branching ($c$ stands from `common', $d$ for `different').  Then in the worst case, when $|R_u| = |R_v| = 1$, the number of recursive calls is given by the recurrence
$$
f(k) = 2^c f(k - c) + 2^d f(k - d),
$$
whose characteristic polynomial is 
$a^k - 2^c a^{k - c} - 2^d a^{k - d}$.
We will be interested in the worst possible combination of $c$ and $d$, i.e. that lead to a polynomial with the maximum largest real root.

\begin{definition}
Let $f$ be a polynomial function.  Then $lrr(f)$ denotes the largest real root of $f$.

Let $c, d \geq 1$ be integers.
We denote by $lrr(c, d)$ the largest real root of the polynomial function $f(a) = a^k - 2^c a^{k - c} - 2^d a^{k - d}$.
\end{definition}

The intuition that higher $c$ and $d$ is better follows from the next technical lemma, which describes some situations in which a characteristic polynomial is preferable over another.  
Although we will not need the full generality of the statement here, it might be of independent interest.

\begin{lemma}\label{lem:high-c-is-better}
Let $k \in \mathbb{N}$, let $b > 1$ be a real number and let $c, c_0, \ldots, c_{k-1}$ be non-negative reals, with $0 < c < k$.  Let 

\[ 
f(a) = a^k - \sum_{j = 0}^{k - 1} c_j a^j - b^c a^{k - c}
\]

be a polynomial.
Moreover let $\epsilon$ be such that $0 < \epsilon \leq k - c$, and let 
\[
f^*(a) = a^k - \sum_{j = 0}^{k - 1} c_j a^j - b^{c + \epsilon} a^{k - c - \epsilon}
\]
Then $lrr(f^*) \leq lrr(f)$.
\end{lemma}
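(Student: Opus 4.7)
The plan is to let $\alpha = lrr(f)$ and show two things: (i) $f^*(\alpha) \geq 0$, and (ii) no real number strictly greater than $\alpha$ can be a root of $f^*$. Together these imply $lrr(f^*) \leq \alpha$. The whole argument hinges on a single intermediate inequality: that $\alpha \geq b$. Once this is in hand, everything reduces to short algebraic manipulations comparing $f$ and $f^*$ at common points.

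The first step is to establish that $\alpha \geq b$. Observe that $f(a) \to \infty$ as $a \to \infty$, so some real root exists provided $f$ is nonpositive somewhere. Evaluate $f$ at $a = b$: the last two terms combine as
\[
f(b) = b^k - \sum_{j=0}^{k-1} c_j b^j - b^c \cdot b^{k-c} = -\sum_{j=0}^{k-1} c_j b^j \leq 0,
\]
since each $c_j \geq 0$. By the intermediate value theorem, $f$ has a real root in $[b, \infty)$, so $\alpha = lrr(f) \geq b$.

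Next, I compare $f^*$ and $f$ by subtracting them. A direct computation gives
\[
f^*(a) - f(a) = b^c a^{k-c} - b^{c+\epsilon} a^{k-c-\epsilon} = b^c a^{k-c-\epsilon}\bigl(a^\epsilon - b^\epsilon\bigr).
\]
Since $\alpha \geq b$ and $\epsilon > 0$, evaluating at $a = \alpha$ yields $f^*(\alpha) - f(\alpha) \geq 0$; combined with $f(\alpha) = 0$, this gives $f^*(\alpha) \geq 0$, as desired.

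The final step is to rule out a root $\beta > \alpha$ of $f^*$. Suppose such a $\beta$ exists, so $f^*(\beta) = 0$. Applying the same identity at $a = \beta$, and using $\beta > \alpha \geq b$, one gets $f(\beta) = f^*(\beta) - b^c \beta^{k-c-\epsilon}(\beta^\epsilon - b^\epsilon) \leq 0$. But $f(a) \to \infty$ as $a \to \infty$, so this forces $f$ to have a real root in $[\beta, \infty)$, contradicting that $\alpha < \beta$ is the largest real root of $f$. Hence $lrr(f^*) \leq \alpha = lrr(f)$. I don't expect any real obstacle; the only subtlety to handle carefully is sign-tracking in the identity $f^* - f = b^c a^{k-c-\epsilon}(a^\epsilon - b^\epsilon)$, and noting that $\epsilon$ may be non-integer so one should interpret powers as real exponentials on $(0,\infty)$, which is fine since both $\alpha$ and $\beta$ are positive.
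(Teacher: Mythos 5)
Your proof is correct and follows essentially the same route as the paper's: both arguments rest on the two facts that $lrr(f) \geq b$ and that $f^*(a) - f(a) = b^c a^{k-c-\epsilon}(a^\epsilon - b^\epsilon) \geq 0$ for $a \geq b$, then conclude that $f^*$ cannot vanish beyond $lrr(f)$. Your way of getting $lrr(f) \geq b$ (evaluating $f(b) = -\sum_j c_j b^j \leq 0$ and invoking the intermediate value theorem) is a slightly cleaner packaging of the paper's sign analysis, but the argument is the same in substance.
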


\begin{proof}
We first claim that $f^*(a) \geq f(a)$ for all $a \geq b$, where $b$ is as defined in the statement.
We prove this by contraposition, i.e. we assume that ${f^*(a) < f(a)}$ for some $a$, and deduce $a < b$.  If $a \leq 0$, then $a < b$, so suppose $a$ is positive.
Then by assumption,

\vspace{-5mm}
\begin{align*}
    a^k - \sum_{j = 0}^{k - 1} c_j a^j - b^{c + \epsilon} a^{k - c - \epsilon} < a^k - \sum_{j = 0}^{k - 1} c_j a^j - b^c a^{k - c},
\end{align*}
and thus $- b^{c + \epsilon} a^{k - c - \epsilon} < - b^c a^{k - c}$.
%
%
This leads to $a^{\epsilon} < b^{\epsilon}$.  Given that $a > 0, b > 1$, this means that $a < b$, proving our claim by contraposition.

We next claim that $lrr(f) \geq b$.
Note that for each $a$ satisfying $0 \leq a < b$, because the $c_j$'s are non-negative,
$f(a) \leq a^k - b^c a^{k-c} = a^{k - c}(a^c - b^c)$.
Still assuming that $0 \leq a < b$, we have $a^c - b^c < 0$ and thus the real roots of $f$ cannot be in the interval $[0..b)$.  Moreover, $f$ is negative in this interval and, since the leading coefficient of $f$ is positive, $f$ is eventually positive for some $a \geq b$, and by the continuity of $f$, $f(a) = 0$ for some $a \geq b$.  This implies $lrr(f) \geq b$.

Owing to its leading coefficient, we also know that $f$ is positive for all ${a > lrr(f)}$.  
Since $f^*(a) \geq f(a)$ for all $a \geq b$ and $lrr(f) \geq b$, 
we know that $f^*$ is also positive for all $a > lrr(f)$.
Thus $f^*(a) > 0$ for all $a > lrr(f)$. It follows that any real root of $f^*$ must be equal to or less than $lrr(f)$, proving the lemma.
\qed
\end{proof}

\begin{corollary}\label{cor:lrr-better-high-c-d}
Let $c, d \geq 1$ be integers.  Then $lrr(c, d) \geq lrr(c + 1, d)$ and $lrr(c, d) \geq lrr(c, d + 1)$.
\end{corollary}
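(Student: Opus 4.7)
The plan is to invoke Lemma~\ref{lem:high-c-is-better} twice, once for each inequality, specialized to $b = 2$ and $\epsilon = 1$. Both bounds $lrr(c,d) \geq lrr(c+1,d)$ and $lrr(c,d) \geq lrr(c,d+1)$ should drop out directly, with essentially no new work beyond recognizing that the defining polynomial of $lrr(c,d)$ already fits the template of that lemma.

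For the first inequality, I would write the polynomial $f(a) = a^k - 2^c a^{k-c} - 2^d a^{k-d}$ whose largest real root is $lrr(c,d)$ in the form $f(a) = a^k - \sum_{j=0}^{k-1} c_j a^j - 2^c a^{k-c}$, where $c_{k-d} = 2^d$ and all other $c_j = 0$. All coefficients are non-negative and $b=2 > 1$, so Lemma~\ref{lem:high-c-is-better} applies with $\epsilon = 1$. The transformed polynomial $f^*(a) = a^k - 2^d a^{k-d} - 2^{c+1} a^{k-c-1}$ is exactly the polynomial whose largest real root defines $lrr(c+1,d)$, so the lemma yields $lrr(c+1,d) = lrr(f^*) \leq lrr(f) = lrr(c,d)$.

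The second inequality is a mirror image: I would now absorb the $-2^c a^{k-c}$ term into the generic sum by setting $c_{k-c} = 2^c$, treat $-2^d a^{k-d}$ as the distinguished term $-b^d a^{k-d}$, and apply Lemma~\ref{lem:high-c-is-better} again with $\epsilon = 1$. This produces $f^*(a) = a^k - 2^c a^{k-c} - 2^{d+1} a^{k-d-1}$, whose largest real root is $lrr(c,d+1)$, giving the claim.

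I do not expect any substantial obstacle; the heavy lifting has been done in Lemma~\ref{lem:high-c-is-better}. The only bookkeeping worth flagging is verifying the hypotheses $0 < c < k$ and $0 < \epsilon \leq k-c$ (and the analogous conditions for $d$ in the second case), which merely require $k$ to be at least $c+1$ (respectively $d+1$). This is harmless because the nonzero real roots of $a^k - 2^c a^{k-c} - 2^d a^{k-d}$ do not depend on $k$ once $k \geq \max(c,d)+1$: multiplying or dividing by a positive power of $a$ does not change the positive real roots, so we may freely take $k$ as large as needed for the hypotheses of Lemma~\ref{lem:high-c-is-better} to apply.
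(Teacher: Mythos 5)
Your proposal is correct and matches the paper's own proof, which likewise invokes Lemma~\ref{lem:high-c-is-better} twice with $b = 2$ and $\epsilon = 1$, absorbing the other exponential term into the generic non-negative sum. Your extra care in checking the hypotheses $0 < c < k$ and $\epsilon \leq k - c$ (by noting $k$ can be taken large without changing the positive roots) is a minor refinement the paper leaves implicit.
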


\begin{proof}
Recall that $lrr(c, d)$ is the largest real root of 
${f(a) = a^k - 2^c a^{k-c} - 2^d a^{k - d}}$.  Consider $lrr(c, d+1)$,  the largest real root of ${f^*(a) = a^k - 2^c a^{k-c} - 2^{d + 1} a^{k-d-1}}$.  Using Lemma~\ref{lem:high-c-is-better} and plugging in $b = 2$, $\epsilon = 1$, we see that $lrr(c, d + 1) \leq lrr(c, d)$.  The proof is the same for $lrr(c+1, d)\leq lrr(c, d)$. 
\qed
\end{proof}

We can then show that the simple algorithm above achieves a similar running time as that of~\cite{guo2008improved}.
The time complexity proof is based on the idea that, the worst case occurs when $c$ and $d$ are as small as possible.  Since $G$ contains a vertex of degree at least $3$, the worst case is when we branch over $u, v$ of degrees $3$ and $1$, which gives branching factor $lrr(1, 2) \simeq 3.237$.  The details follow.

\begin{theorem}\label{thm:mainalgo}
Algorithm~\ref{alg:mainbranching} is correct and runs in time $O^*(3.237^k)$.
\end{theorem}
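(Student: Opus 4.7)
\emph{Proof proposal.} The plan is to establish correctness by strong induction on $k$ and to analyse the running time via the induced branching recurrence together with Corollary~\ref{cor:lrr-better-high-c-d}.

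For correctness, the base cases are direct: $k<0$ correctly returns NO, an empty graph after Rule 1 correctly returns YES, and maximum degree $2$ leaves only disjoint paths and even cycles, on which \textsc{Bicluster Editing} is solvable in polynomial time (e.g., by a dynamic program on each component). For the inductive step, we argue that at least one branch preserves the existence of a cheap biclustering. Fix an optimal biclustering $\mathcal{B}$ of $G$ in which twins stay together, guaranteed by Lemma~\ref{lem:twins-in-same}; then $R_u$ and $R_v$ each lie in a single bicluster of $\mathcal{B}$. If they are in the same bicluster $B$, set $Z = D \cap B$; otherwise, set $Z = \{z \in C : z \text{ lies in the bicluster of } u\}$, breaking ties on vertices of $C$ belonging to neither bicluster arbitrarily. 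In either case the $h$ modifications performed by the corresponding branch form a subset of $E(G)\triangle E(\mathcal{B})$, so $\mathcal{B}$ is still a biclustering of the modified graph $G'$ with cost exactly $|E(G)\triangle E(\mathcal{B})| - h \le k - h$; the inductive hypothesis applied to $(G',k-h)$ then returns YES. Capping $|D|$ and $|C|$ at the constant $100$ does not break this argument: the desired $Z$ is always realisable as a subset of the selected prefix, and any remaining conflicts between $u$ and $v$ get resolved in subsequent recursive calls since the unmatched modifications of $\mathcal{B}$ are still available.

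For the running time, each recursive call performs $O(2^{|D|} + 2^{|C|}) = O(2^{100})$ loop iterations of polynomial work, so its own work (excluding recursion) is polynomial. With $c := |C|$ and $d := |D|$, the number of edges modified in each branch satisfies $h \ge d$ in the first loop and $h \ge c$ in the second (both attained when $|R_u|=|R_v|=1$), so $T(k) \le 2^{d} T(k-d) + 2^{c} T(k-c) + \mathrm{poly}(n)$, whose branching factor is $lrr(c,d)$. By Corollary~\ref{cor:lrr-better-high-c-d}, $lrr(c,d)$ is nonincreasing in each argument, so the worst case occurs at the smallest feasible pair. Since $u,v$ are in conflict, $c,d \ge 1$; and since the algorithm picks $u$ with $|N(u)|\ge 3$ (available because Rule 1 has been applied and Lemma~\ref{lem:connected-component-bicluster} forces every remaining vertex to be in some conflict), we have $c + d = |N(u)\cup N(v)| \ge 3$. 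The extremal cases are therefore $(c,d)=(1,2)$ and $(2,1)$, both yielding the characteristic polynomial $a^{2} - 2a - 4$ whose largest real root is $1+\sqrt{5} < 3.237$, proving the claimed $O^{*}(3.237^{k})$ bound.

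The main subtlety I expect is the ``different bicluster'' branch when some $z \in C$ belongs in $\mathcal{B}$ to neither the bicluster of $u$ nor of $v$: the algorithm deletes only one of $uz, vz$ in the committed branch, while $\mathcal{B}$ deletes both. The bookkeeping above handles this cleanly, because the algorithm's deletions are still a subset of $E(G)\triangle E(\mathcal{B})$, so the symmetric difference simply shrinks by $h$ and the induction proceeds. The remaining pieces---verifying that Corollary~\ref{cor:lrr-better-high-c-d} gives the correct monotonicity and computing the root of $a^{2}-2a-4$---are routine.
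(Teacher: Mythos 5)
Your proposal is correct and follows essentially the same route as the paper: induction on $k$ using Lemma~\ref{lem:twins-in-same} to exhibit a branch whose modifications all lie in $E(G)\triangle E(\mathcal{B})$ (the paper phrases this as ``$G'$ can be modified into $\mathcal{B}$ with at most $k-h$ further edits''), followed by the identical recurrence analysis via Corollary~\ref{cor:lrr-better-high-c-d} with worst case $lrr(1,2)=1+\sqrt5<3.237$. The only piece you leave implicit is the soundness direction (that every branch reporting YES genuinely yields a cost-$\le k$ biclustering of $G$), which the paper spells out but is immediate.
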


\begin{proof}
Let us first prove the correctness of the algorithm.  We use induction on $k$.  As a base case, if $k < 0$, then reporting ``NO'' is correct since there can exist no solution that modifies a negative number of edges.
Assume $k \geq 0$, and suppose the algorithm is correct for smaller values.
If $G$ has maximum degree $2$, then $G$ consists of cycles and paths and can easily be solved optimally.
If $G$ contains a connected component $X$ that is a bicluster, then Rule 1 lets us remove $X$ safely. 
If $G$ is has no vertices after this, then no modification is needed and reporting ``YES'' is correct.
If $G$ still has vertices, then each of its connected component is not a bicluster.  By contraposition of Lemma~\ref{lem:connected-component-bicluster}, for any vertex $u \in V_i$ with $i \in \{1,2\}$, there exists $v \in V_i$ such that $u$ and $v$ are in conflict.
By choosing $u$ to be of degree at least $3$, this guarantees that the algorithm will find $u, v$ as described.

Now, assume that it is possible to edit at most $k$ edges of $G$ to obtain a bicluster graph.
Let $R_u$ and $R_v$ be the twin classes of $u$ and $v$, respectively.  Let $\B$ be an optimal biclustering in which all twins are in the same bicluster (invoking Lemma~\ref{lem:twins-in-same} once again).  Note that the cost of $\B$ is at most $k$, and that we may thus assume that all members of $R_u$, and all members of $R_v$, are in the same bicluster.  

Suppose first that $R_u$ and $R_v$ are in the same bicluster $B$.  Let $D$ be the subset of $N(u) \triangle N(v)$ chosen by the algorithm.
Consider $Z = B \cap D$.
Then to obtain $\B$, we must insert every missing edge between $R_u \cup R_v$ and $Z$, and delete every edge between $R_u \cup R_v$ and $D \setminus B = D \setminus Z$.  Let $G'$ be the graph obtained from $G$ after applying these modifications, and let $h$ be the number of modifications.  We know that $G'$ can be modified into $\B$ using at most $k - h$ modifications.
Since $Z \subseteq D$, in the first loop of branchings, we branch on the $Z$ subset, in particular.  
Therefore, the algorithm recurses on $G'$ and $k - h$ at some point. By induction, the algorithm will correctly report ``YES'', and it will also do so on input $G$ and $k$.

Assume instead that $R_u, R_v$ are not in the same bicluster of $\B$.
Let $B_u$ and $B_v$ be the biclusters that contain $R_u$ and $R_v$, respectively.  Let $C$ be the subset of $N(u) \cap N(v)$ chosen by the algorithm.
Consider $Z = B_u \cap C$.
To obtain $\B$, we must delete every edge between $R_u$ and $C \setminus Z$, and every edge between $R_v$ and $Z$.  Let $G'$ be the graph obtained from $G$ after applying these deletions, and let $h$ be the number of deletions.  We know that $G'$ can be modified into $\B$ using at most $k - h$ modifications.
Since $Z \subseteq C$, in the second loop of branchings we branch on the $Z$ subset, in particular.  
Thus the algorithm recurses on $G'$ and $k - h$ at some point. By induction, the algorithm will correctly report ``YES'', and it will also do so on input $G$ and $k$.

Finally, assume instead that it is not possible to edit at most $k$ edges of $G$ to obtain a bicluster graph.
Algorithm~\ref{alg:mainbranching} branches into several ways of obtaining an alternate graph $G'$, each time reducing $k$ by the number $h$ of modifications to turn $G$ into $G'$.
We know that for every such $G'$, it is not possible to apply at most $k - h$ modifications to $G'$ to obtain a bicluster graph, as otherwise $G$ would admit a solution with at most $k$ modifications.  Therefore, every recursive call will report ``NO'', and thus the algorithm will correctly report ``NO'' on $G$.  This proves correctness.

As for the time complexity, 
we may assume that $G$ has maximum degree at least $3$ and that its connected components are not all biclusters, as otherwise $G$ is solvable in polynomial time.  

Let $u$ be a vertex of degree at least $3$, and assume $u \in V_1$, without loss of generality.  Let $v \in V_1$ be such that $N(u) \cap N(v) \neq \emptyset \neq N(u) \triangle N(v)$.
As argued previously, such a vertex $v$ must exist by contraposition of Lemma~\ref{lem:connected-component-bicluster}.
Let $c = |C|$ and $d = |D|$, where $C \subseteq N(u) \cap N(v)$ and $D \subseteq N(u) \triangle N(v)$ are the sets chosen by the algorithm.  By the definition of a conflict, neither set is empty.
Consider the first branching loop, which branches over $2^d$ possibilities.  For each $Z \subseteq D$ considered, we must modify at least $d$ edges since each member of $Z$ requires an insertion and each member of $D \setminus Z$ requires a deletion.
We therefore have $h \geq d$ in every iteration, and we branch over $2^d$ possibilities that each reduce $k$ by at least $d$.

Consider the second branching loop.
For each $Z \subseteq C$ considered, we must delete at least $c$ edges since each $z \in Z$ requires the $vz$ deletion and each $z' \in C \setminus Z$ requires the $uz'$ deletion.
We therefore have $h \geq c$ in every iteration, and in the worst case we branch over $2^c$ possibilities that each reduce $k$ by $c$.

This yields a branching vector with $2^c$ subcases with value $k - c$ or less, plus $2^d$ subcases with value $k - d$ or less, corresponding to a worst-case branching factor of $lrr(c, d)$.
Since $u$ is of degree at least $3$, 
we have $c + d \geq 3$.
By Corollary~\ref{cor:lrr-better-high-c-d}, we may restrict our attention to the smallest possible values of $c$ and $d$ that satisfy this inequality, 
which is either $c = 1, d = 2$ or $c = 2, d = 1$.  
The complexity follows from $lrr(1, 2) = lrr(2, 1)$, and that the largest real root of $a^k - 2a^{k-1} - 4a^{k - 2}$ is bounded by $3.237$.  
Since each individual call in the recursions takes polynomial time with respect to $k$ and the size of $G$ (in particular because we bound the size of $C$ and $D$ by a constant), this results in an $O^*(3.237^k)$ time algorithm.
\qed
\end{proof}

\subsection*{With a little bit more work: an $O^*(2.581^k)$ time algorithm}

One might notice that the worst case complexity of the algorithm is achieved when the maximum degree of $G$ is low, since high degree vertices will lead to better branching factors.
This complexity can be significantly improved by handling low degree vertices in an ad-hoc manner.  
In fact, only a slight modification of the algorithm is necessary.  We only need to handle degree $1$ vertices in a particular manner, 
and to restrict ourselves to optimal biclusterings that favor deletions over insertions.  
These allow us to attain a time complexity of $O^*(2.581^k)$.  Although this only requires simple modifications to the algorithm, it is the analysis that is somewhat more complex, as we need to dive into branching calls that create degree 1 vertices or twins.

To be more specific, consider any vertex $v$ and its degree $deg(v)$ in the original input graph.  Then one option is to have $v$ in its own bicluster and delete every edge incident to $v$.  Therefore it is pointless to have more than $deg(v)$ modified edges with $v$ as an endpoint.  
Furthermore, if exactly $deg(v)$ modified edges contain $v$ and some of them are insertions, we can always replace these by $deg(v)$ deletions without altering the optimality of the solution.  

\begin{lemma}\label{lem:solution-with-most-dels}
For a bipartite graph $G$, there exists an optimal biclustering of $G$ such that for any $v \in V(G)$, either $deg(v)$ modified edges contain $v$ and they are all deletions, or at most $deg(v) - 1$ modified edges contain $v$.
\end{lemma}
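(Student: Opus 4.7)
The plan is to choose an optimal biclustering that minimizes the total number of insertions and argue, by contradiction, that it already satisfies the claimed property. Concretely, I would fix an optimal biclustering $\B$ of $G$ such that $|E(\B) \setminus E(G)|$ is as small as possible among all optimal biclusterings. Suppose some vertex $v \in V(G)$ violates the property; this precisely means $\eu{v}{G}{\B} \geq deg(v)$ while at least one modified edge incident to $v$ is an insertion (note that if all modifications incident to $v$ were deletions, then $\eu{v}{G}{\B} \leq deg(v)$ automatically, so the ``$= deg(v)$ all-deletions'' alternative is forced).

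The key local move I would use is to isolate $v$: let $B$ be the bicluster of $\B$ containing $v$ and let $S$ be the set of vertices of $B$ on the opposite side from $v$ (so $S = N_\B(v)$); define $\B'$ by removing $v$ from $B$ and placing $\{v\}$ as its own singleton bicluster. Isolating $v$ preserves the bicluster structure, since a singleton is trivially a bicluster and $B \setminus \{v\}$ remains complete bipartite. It then suffices to check two inequalities. First, from $\eu{v}{G}{\B} = |N_G(v) \setminus S| + |S \setminus N_G(v)|$ and $deg(v) = |N_G(v) \cap S| + |N_G(v) \setminus S|$, the hypothesis $\eu{v}{G}{\B} \geq deg(v)$ yields $|S \setminus N_G(v)| \geq |N_G(v) \cap S|$.

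Next I would account for the cost difference. Going from $\B$ to $\B'$, exactly the edges $vw$ with $w \in S \cap N_G(v)$ newly become deletions (contributing $+|S \cap N_G(v)|$), while the non-edges $vw$ with $w \in S \setminus N_G(v)$ cease to be insertions (contributing $-|S \setminus N_G(v)|$). Combined with the inequality above, this gives $|E(G) \triangle E(\B')| - |E(G) \triangle E(\B)| \leq 0$, so $\B'$ is also optimal. Finally, the number of insertions strictly drops: we remove $|S \setminus N_G(v)|$ insertions incident to $v$ and introduce no new ones, and by assumption at least one insertion was incident to $v$, so $|S \setminus N_G(v)| \geq 1$. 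This contradicts the choice of $\B$ as minimizing insertions among optimal biclusterings.

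The only real obstacle is the bookkeeping in the cost calculation, particularly making sure no other vertex's contribution changes besides the ones on edges incident to $v$, and deriving the inequality $|N_G(v) \cap S| \leq |S \setminus N_G(v)|$ cleanly from the violation hypothesis; beyond that, the argument is essentially a minimality/exchange argument that mirrors the style already used in the proof of Lemma~\ref{lem:twins-in-same}.
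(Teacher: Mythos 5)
Your proposal is correct and takes essentially the same approach as the paper: the paper picks an optimal biclustering maximizing the number of deletions (equivalent to your minimizing insertions, since the total cost is fixed among optimal solutions), isolates the offending vertex $v$ into a singleton bicluster, and derives the same contradiction. Your explicit computation of the cost change via $|S \setminus N_G(v)| \geq |N_G(v) \cap S|$ is just a slightly more detailed bookkeeping of the paper's terser argument.
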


\begin{proof}
Let $\B$ be an optimal biclustering of $G$ in which the number of deleted edges is maximum, among all optimal biclusterings.  Suppose for contradiction  that for some $v \in V(G)$, it is not the case that $deg(v)$ deleted edges contain $v$, but at least $deg(v)$ modified edges contain $v$.  Using Observation~\ref{obs:sumcost}, we may assume that exactly $deg(v)$ modified edges contain $v$, as otherwise we could put $v$ in its own bicluster to get a better biclustering.
Consider obtaining a biclustering $\B'$ from $\B$ by removing $v$ from its bicluster, and adding a new bicluster that only contains $v$.  Then in $\B'$, $deg(v)$ modified edges contain $v$, so $\B'$ is also optimal.  Moreover, $\B'$ must have strictly more deletions than $\B$, since not all modified edges containing $v$ in $\B$ were deletions.  This contradicts our choice of $\B$.
\qed
\end{proof}

The modified algorithm remembers the original graph and, after any modification, checks that we meet the requirements of Lemma~\ref{lem:solution-with-most-dels}.  It aborts if this is not the case, which is correct by the above lemma. 
The next step is to handle degree $1$ vertices.  

\begin{lemma}\label{lem:degree-one}
Let $G$ be a bipartite graph on which Rule 1 does not apply, and suppose that $G$ has a vertex of degree $1$.  Then it is possible to achieve a branching vector of $(1, 2, 3, 3, 4)$ or better, and thus branching factor less than $2.066$.
\end{lemma}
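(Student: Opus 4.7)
The plan is to analyze the branching induced by the degree-$1$ vertex $u$, using Lemma~\ref{lem:twins-in-same} and Lemma~\ref{lem:solution-with-most-dels} to restrict the candidate optimal biclusterings. Let $N(u) = \{w\}$, let $R_u$ be the twin class of $u$ with $r = |R_u|$, and let $X = N(w) \setminus R_u$ with $\ell = |X|$. Since Rule 1 does not apply to the component of $u$, we have $\ell \geq 1$, and each $x \in X$ has $\deg(x) \geq 2$ (otherwise $x$ would be a twin of $u$).

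First I would invoke Lemma~\ref{lem:solution-with-most-dels} on each $u' \in R_u$: since $\deg(u') = 1$, each such vertex either has its single edge $u'w$ deleted (becoming isolated) or undergoes no modification at all. Combined with Lemma~\ref{lem:twins-in-same}, the bicluster $B$ containing $R_u$ in an optimal solution is either (Case A) a trivial bicluster not containing $w$, which requires $r$ deletions; or (Case B) of the form $R_u \cup Y \cup \{w\}$ for some $Y \subseteq X$, with $B \cap V_2 = \{w\}$ since any additional $V_2$ vertex would require an insertion at some $u' \in R_u$. For each $Y \subseteq X$ the corresponding subcase has cost $(\ell - |Y|) + \sum_{x \in Y}(\deg(x) - 1)$, accounting for deletions of $wx$ with $x \in X \setminus Y$ and, for each $x \in Y$, the deletion of its non-$w$ edges.

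Next I would branch into these $1 + 2^\ell$ subcases and compute the branching vector. In the canonical tight configuration $r = 1$, $\ell = 2$, $\deg(x_1) = \deg(x_2) = 3$, the subcase costs enumerate to $1$, $2$, $3$, $3$, $4$, giving exactly the vector $(1,2,3,3,4)$; a direct calculation confirms that the largest real root of $x^4 - x^3 - x^2 - 2x - 1$ is strictly below $2.066$. For parameter configurations with $\ell \geq 2$ and every $\deg(x_i) \geq 3$, one verifies via the characteristic polynomial that the branching factor is no larger, since each entry of the vector only grows.

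The main obstacle is the degenerate parameter ranges where this naive analysis is too weak, namely $\ell = 1$ with small $\deg(x)$, or some $\deg(x_i) = 2$. For such configurations, the vertex $x$ either has only two neighbors---yielding either a component of maximum degree $2$ that the main algorithm solves in polynomial time, or a refined branching using $x$'s other neighbor---or the first branching step necessarily produces a new degree-$1$ vertex or a new twin class in the recursive instance, on which a follow-up application of Lemma~\ref{lem:solution-with-most-dels} extracts additional cost. Establishing that the amortized multi-step vector over these composite branchings is no worse than $(1,2,3,3,4)$ is the technically delicate part, and corresponds exactly to the ``dive into branching calls that create degree 1 vertices or twins'' alluded to in the discussion preceding the lemma.
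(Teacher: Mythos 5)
Your overall branching structure (isolate $R_u$, or form a bicluster $R_u\cup Y\cup\{w\}$ for $Y\subseteq X$, justified by Lemma~\ref{lem:twins-in-same} and Lemma~\ref{lem:solution-with-most-dels}) matches the spirit of the paper's Case~4, and your cost accounting for the tight configuration $\ell=2$, $\deg(x_1)=\deg(x_2)=3$ correctly recovers $(1,2,3,3,4)$. But the proof has a genuine gap exactly where you flag ``degenerate parameter ranges'' and then defer: those cases are not a technicality to be amortized away, they are where the lemma actually requires new ideas, and your naive vector there is strictly worse than the claimed bound. Concretely, for $\ell=1$ your scheme yields the vector $(1,1,\deg(x)-1)$ (Case~A costs $r=1$, $Y=\emptyset$ costs $\ell=1$, $Y=\{x\}$ costs $\deg(x)-1$); for $\deg(x)=3$ this has branching factor $1+\sqrt{2}\approx 2.414>2.066$, and it remains above $2.066$ for all $\deg(x)\le 5$. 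Similarly, a degree-$2$ vertex in $X$ contributes a subcase of cost $1$ inside Case~B, again inflating the factor. Saying these branches ``produce a new degree-$1$ vertex or twin on which a follow-up application extracts additional cost'' is not a proof, and it also misattributes the fix: the paper does not resolve the degree-$1$ lemma by recursive diving (that technique is used later, in Lemma~\ref{lem:twins} and Lemma~\ref{lem:d-geq-2}).

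What the paper actually does in these cases is eliminate branches by exchange arguments rather than pay for them. When $\ell=1$ and $\deg(x)=2$, the unique $x$ is a sister of $u$ and Rule~3 applies deterministically (no branching at all). When $\ell=1$ and $\deg(x)\ge 3$, the paper shows that if $u$ and $x$ end up in different biclusters one may always delete $xw$ rather than $uw$ (moving $w$ into $R_u$'s bicluster never increases the cost), which collapses your two cost-$1$ branches into one and gives $(1,2)$. When $\ell\ge 2$ and some $x\in X$ has degree~$2$, after the first branch fixes $uw$, that $x$ can be merged with $u$ deterministically, again avoiding a cheap branch. These dominance arguments are the missing content; without them the stated branching factor does not follow. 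A secondary, smaller issue: your claim that for $\ell\ge 2$ with all degrees $\ge 3$ the factor ``is no larger since each entry only grows'' needs justification, because the number of branches grows as $2^\ell$; the paper sidesteps this by branching on only two elements of $X$, keeping a fixed five-branch vector.
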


\begin{proof}
Assume that $G$ has a vertex $u \in V_i$ with a single neighbor $v$, $i \in \{1,2\}$.  
By Lemma~\ref{lem:solution-with-most-dels}, we may assume that either $uv$ gets deleted, or that no modified edge contains $u$.
Let $R_u$ be the twin class that contains $u$.
Let $W = N(v) \setminus R_u$, and observe that each $w \in W$ has degree at least $2$ (otherwise, $w$ would be in $R_u$).  If $W$ is empty, then $R_u \cup \{v\}$ forms a bicluster and Rule 1 applies, so suppose $W \neq \emptyset$.
We consider four cases in the proof, which lead to either an immediate deletion, or branching vectors $(1,2), (1,2,3)$, or $(1,2,3,3,4)$, which are all better than $(1, 2, 3, 3, 4)$. 

\vspace{1mm}

\noindent
\emph{Case 1: $W$ has a single vertex $w$, and $w$ has degree $2$.}
In this case, notice that $w$ is a sister of $u$, and that Rule 3 applies since all neighbors of $v$ are sisters of $u$.  We may thus delete the edge between $w$ and its neighbor other than $v$ and reduce $k$ by $1$.

\vspace{1mm}

\noindent
\emph{Case 2: $|W| \geq 2$ and $W$ has a vertex of degree $2$.}
Let $w \in W$ be of degree $2$ and let $z$ be the neighbor of $w$ other than $v$.
We can first branch into the situation where we delete $uv$ and decrease $k$ by $1$.  Otherwise, we consider that $uv$ is fixed and cannot be deleted.  We argue that $u$ and $w$ can be assumed to be in the same bicluster.
Suppose that in an optimal biclustering $\B$, $u$ and $w$ are in different biclusters. Notice that since no inserted edge contains $u$, the bicluster $B$ that contains $u$ contains only one vertex from the side of $v$, which is $v$ itself.
It follows that we can obtain an alternate biclustering $\B'$ by adding $w$ into $B$.  This requires deleting $wz$ instead of $wv$.  Thus, assuming that $uv$ is fixed, there exists an optimal biclustering in which $u$ and $w$ are in the same bicluster.  We can thus delete $wz$ safely, which reduces $k$ by $1$ and makes $u$ and $w$ twins.
Now let $w' \in W \setminus \{w\}$.  If $w'$ also has degree $2$, we can apply the same argument on the updated graph and delete the edge between $w'$ and its neighbour other than $v$.  In this situation, we branched over two cases, one with the deletion $uv$, and one with two deletions incident to $W$, with branching vector $(1, 2)$.

Suppose instead that $w'$ has degree $3$ or more.  We can put $u, w'$ in the same bicluster, in which case we must delete all edges between $w'$ and its neighbors other than $v$ (because we assume that $u$ is incident to no insertion).  Counting the $wz$ deletion, this makes at least $3$ deletions.  Or, we put $u, w'$ in distinct biclusters, in which case we delete $w'v$ since $uv$ is fixed, and in this scenario we make at total of two deletions. 
To sum up, we branch into $3$ cases with branching vector $(1,2,3)$.

\vspace{1mm}

\noindent
\emph{Case 3: $W$ has a single vertex $w$, and $w$ has degree $3$ or more.}
If $u$ and $w$ are in the same bicluster, we must delete the edges between $w$ and $N(w) \setminus \{v\}$ since we forbid insertions that involve $u$.  Since $w$ has degree at least $3$ or more, this decreases $k$ by at least $2$. 

If $u$ and $w$ are not in the same bicluster, we can either delete $uv$ or $wv$.  We claim that we can branch on only deleting $wv$. Let $\B$ be an optimal biclustering in which twins are in the same bicluster (which exists by Lemma~\ref{lem:twins-in-same}).  Assume that in $\B$, $u$ and $v$ are not in the same bicluster, so that $uv$ is deleted (as well as all the edges between $R_u$ and $v$).
We may assume that the bicluster $B_u$ of $\B$ that contains $u$ is equal to $R_u$.  Consider the biclustering $\B'$ obtained by removing $v$ from its bicluster and adding $v$ to $B_u$.  This saves any inserted edge containing $v$, and saves all the deletions between $R_u$ and $v$, which is at least one deletion since $uv$ is such an edge.
On the other hand, this requires deleting $wv$ (but no other edge since $|W| = 1$).  We save at least as many modifications as we create, so we may assume that $wv$ is deleted.  Thus we only branch in the case where $wv$ is deleted, reducing $k$ by $1$.
The above cases lead to branching vector $(1,2)$.

\vspace{1mm}

\noindent
\emph{Case 4: $|W| \geq 2$ and only contains vertices of degree $3$ or more.}
Let $x,y \in W$.  
We branch into five scenarios that cover all ways of obtaining a solution\footnote{Notice that this could be divided into three cases: two of $x, y$ are with $u$, only one of them is, or none.  However, our division into five cases makes the recursion calls to make more explicit.}: 

\vspace{-3mm}

\begin{itemize}
    \item 
    delete $uv$ and reduce $k$ by $1$;
    
    \item 
    fix $uv$, and put $u, x$ in the same bicluster.
    This requires deleting at least two edges incident to $x$ (since it has degree $3$ or more, and we forbid insertions with $u$).
    Once this is done, we consider two more cases:
        \begin{itemize}
            \item 
            put $u, y$ in the same bicluster.
            This also requires deleting at least two edges incident to $y$.  When we reach that case, we can decrease $k$ by at least $4$.
            
            \item 
            put $u, y$ in different biclusters.
            This requires deleting $yv$, and we may decrease $k$ by at least $3$.
        \end{itemize}
        
    \item 
    fix $uv$, and put $u, x$ in  different biclusters.
    This requires deleting $xv$.
    Once this is done, we consider two more cases:
        \begin{itemize}
            \item 
            put $u, y$ in the same bicluster.
            This requires deleting at least two edges incident to $y$.  When we reach that case, we can decrease $k$ by at least $3$.
            
            \item 
            put $u, y$ in different biclusters.
            This requires deleting $yv$, and we may decrease $k$ by $2$.
        \end{itemize}
\end{itemize}

These five cases yield branching vector $(1, 2, 3, 3, 4)$, which has branching factor less than $2.066$.
\qed
\end{proof}

Recall that the definition of ``better'' branching vectors implies that even though the degree 1 rule may use several branching vectors, they are all, in some sense, subsequences of $(1,2,3,3,4)$.  Therefore, it will be safe to assume that the degree 1 rule always achieves this branching vector. 

The rest of the time analysis is dedicated to show that if we get rid of vertices of degree one, we are able to achieve a much better branching factor. 
Note that Lemma~\ref{lem:degree-one} requires a graph on which Rule 1 does not apply.
Thus we must ensure that the vertices of degree $1$ that we create do not get deleted by Rule 1.  This will be handled by the following lemma.

\begin{lemma}\label{lem:deg-one-no-bicluster}
Let $G$ be a bipartite graph.  Suppose that $G$ has minimum degree at least $2$.  Let $u \in V(G)$ and let $v \in N(u)$.  Let $G'$ be obtained from $G$ by deleting every edge incident to $u$ except $uv$.  
Then the connected component of $G'$ that contains $u$ is not a bicluster.
\end{lemma}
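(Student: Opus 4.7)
The goal is to show that after deleting all edges incident to $u$ except $uv$, the component of $u$ in $G'$ fails to be a bicluster. The plan is a contradiction argument using the minimum degree hypothesis on $G$.

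First I would observe that in $G'$, the vertex $u$ has degree exactly $1$, with unique neighbor $v$. Suppose for contradiction that the component $K$ of $G'$ containing $u$ is a bicluster, so that $K = A \cup B$ is a complete bipartite graph with $u \in A$ and $v \in B$ (on opposite sides of the bipartition since $uv \in E(G')$). Because $u$ must be adjacent in $G'$ to every vertex of $B$ and $\deg_{G'}(u) = 1$, I would conclude $B = \{v\}$.

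Next I would exploit the minimum degree condition on the two neighbors of interest. Since the only edges removed when passing from $G$ to $G'$ are incident to $u$, for any vertex $x \neq u$ the neighbors of $x$ in $G'$ are exactly its neighbors in $G$, except possibly the edge $xu$. Applied to $v$: in $G$ the vertex $v$ has degree at least $2$, and the edge $uv$ is preserved in $G'$, so $v$ retains all of its neighbors; in particular $v$ has some neighbor $w \neq u$ in $G'$. This $w$ lies in $K$ (via the edge $wv$) and on the same side as $u$, so $w \in A$.

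Finally I would apply the same reasoning to $w$: in $G$, $\deg_G(w) \geq 2$, and since $w \neq u$ none of $w$'s incident edges were deleted, so $\deg_{G'}(w) \geq 2$. Hence $w$ has some neighbor $v' \neq v$ in $G'$, and $v'$ lies on the same side as $v$. The vertex $v'$ is in $K$ (connected via $w$), so $v' \in B$. But $v' \neq v$, contradicting $B = \{v\}$. Therefore $K$ is not a bicluster, completing the proof.

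There is no real obstacle here; the lemma is essentially a short structural observation. The only mild subtlety is being careful that removing edges incident to $u$ does not affect the degrees of the other vertices ($v$ and $w$), which is what allows the minimum degree hypothesis to propagate from $G$ to $G'$ and forces a second vertex on $v$'s side to appear in the component.
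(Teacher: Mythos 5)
Your proof is correct and follows essentially the same route as the paper: both arguments walk the chain $u$--$v$--$w$--(second neighbor of $w$), using that deleting edges incident to $u$ leaves $N(v)$ and $N(w)$ untouched so the minimum-degree-$2$ hypothesis carries over. The only cosmetic difference is that the paper packages the conclusion via the notion of conflicting vertices and Lemma~\ref{lem:connected-component-bicluster}, whereas you unfold the bicluster definition directly to reach the contradiction $B=\{v\}$ versus $v'\in B$.
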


\begin{proof}
Note that $N_G(v) = N_{G'}(v)$ and that $v$ has some neighbor $w$ other than $u$.  Furthermore, $N_G(w) = N_{G'}(w)$ and $w$ has some neighbor $z$ other than $v$, and $z \notin N_{G'}(u)$.  Thus $u, w$ are in conflict and the connected component containing $u$ cannot be a bicluster by Lemma~\ref{lem:connected-component-bicluster}.
\qed
\end{proof}

We next observe that all the branching cases that put $u, v$ in the same bicluster create twins.  When such twins have remaining conflicts, we can branch on them to improve the analysis.

\begin{lemma}\label{lem:twins}
    Let $G$ be a bipartite graph on which Rule 1 does not apply.  Suppose that $G$ has a twin class with at least two vertices.  Then we may achieve branching vector $(1,2,2,3,3,4)$ or better, and therefore branching factor less than $2.317$.
\end{lemma}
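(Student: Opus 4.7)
The plan is to apply the same branching as in Algorithm~\ref{alg:mainbranching}, but deliberately choose the branching pair so that one side has a non-trivial twin class. Let $R$ be a twin class of $G$ with $|R|\geq 2$, pick $u\in R$, and set $R_u := R$. Because Rule~1 does not apply, the connected component containing $u$ is not a bicluster, so by the contrapositive of Lemma~\ref{lem:connected-component-bicluster} there is some $v$ on the same side as $u$ that is in conflict with $u$. If $G$ has a degree-$1$ vertex I invoke Lemma~\ref{lem:degree-one} instead, whose branching vector $(1,2,3,3,4)$ is better than the target $(1,2,2,3,3,4)$; so I will assume minimum degree at least $2$.

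I then branch on $(u,v)$, letting $R_v$ denote the twin class of $v$, and define $c := |N(u)\cap N(v)|$ and $d := |N(u)\triangle N(v)|$; the conflict yields $c,d\geq 1$. From $|N(u)|=c+|N(u)\setminus N(v)|$, $|N(v)|=c+|N(v)\setminus N(u)|$, and $|N(u)|,|N(v)|\geq 2$, I deduce $d\geq 4-2c$, so $(c,d)=(1,1)$ is impossible. The critical case is therefore $(c,d)\in\{(1,2),(2,1)\}$, which I treat together since they will produce the same vector.

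The key observation is that, because $|R_u|\geq 2$, every edge modification performed by the algorithm between $R_u$ and a vertex of $C$ or $D$ costs at least $2$, whereas modifications incident to $R_v$ cost at least $|R_v|\geq 1$. In the $(c,d)=(1,2)$ case, the "same" branch enumerates all four subsets $Z\subseteq D$; a direct case analysis parametrised by $a := |D\cap(N(u)\setminus N(v))|$ and $b := |D\cap(N(v)\setminus N(u))|$ with $a+b=2$ shows that in each split $(a,b)\in\{(2,0),(1,1),(0,2)\}$, the four subcase costs are $(2,3,3,4)$ in the worst case $|R_v|=1$. The "different" branch has two subcases: separating $C$ from $R_u$ costs $|R_u|\geq 2$, or separating it from $R_v$ costs $|R_v|\geq 1$, contributing $(1,2)$. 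Concatenating yields $(1,2,2,3,3,4)$, whose characteristic polynomial $x^{4}-x^{3}-2x^{2}-2x-1$ has largest real root below $2.317$. The case $(c,d)=(2,1)$ is entirely symmetric, with the roles of the "same" and "different" branches swapped.

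The hardest part is arguing that richer configurations do not worsen the branching factor. If $|R_v|\geq 2$ then every single modification now costs at least $2$, strictly improving the above vector (e.g.\ to $(2,2,4,4,4,4)$, whose branching factor is below $1.8$). For $|R_v|=1$ and $c+d\geq 4$, a monotonicity argument in the spirit of Lemma~\ref{lem:high-c-is-better}/Corollary~\ref{cor:lrr-better-high-c-d}, adapted to account for the fact that modifications involving $R_u$ now contribute an extra unit of weight, shows that each unit increase in $c$ or $d$ replaces a subcase by several subcases that are each strictly more costly, and so cannot increase the largest real root. The genuine bookkeeping effort is the exhaustive check of the $(c,d)=(1,2)$ base case across the three splits of $D$; once that is nailed down, monotonicity handles the rest.
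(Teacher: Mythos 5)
Your proposal rests on the same core mechanism as the paper's proof: branch the conflict between the size-$\geq 2$ twin class and a conflicting vertex, and exploit the fact that every insertion or deletion touching that class costs at least $2$, which upgrades the generic costs to $(1,2,2,3,3,4)$. Your dispatch to Lemma~\ref{lem:degree-one} for degree-$1$ vertices, the exclusion of $(c,d)=(1,1)$ via minimum degree $2$, and the base-case computation for $(c,d)\in\{(1,2),(2,1)\}$ with $|R_v|=1$ all match the paper. Where you genuinely diverge is in the treatment of $c+d\geq 4$: you branch over all of $C$ and $D$ (as Algorithm~\ref{alg:mainbranching} does) and then appeal to a monotonicity principle ``in the spirit of'' Lemma~\ref{lem:high-c-is-better}, whereas the paper never branches on more than three elements of $N(R)\cup N(u)$ at once --- it resolves exactly one vertex of the symmetric difference in the ``same bicluster'' branch and exactly two common neighbours in the ``different biclusters'' branch (or the reverse split when $|N(R)\cap N(u)|=1$), which produces $(1,2,2,3,3,4)$ verbatim in every configuration with no further argument. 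The price of your route is that Corollary~\ref{cor:lrr-better-high-c-d} does not apply as stated, since your subcases now have heterogeneous costs rather than uniform cost $c$ or $d$, so the monotonicity claim is currently an IOU. It is a payable one: writing $y=1/x$, your weighted recurrence has characteristic equation $(y+y^2)^c+(y+y^2)^d=1$, and at the root of the $(1,2)$ instance one has $y+y^2=(\sqrt{5}-1)/2<1$, so increasing $c$ or $d$ only decreases the left side and hence cannot increase the branching factor. With that short lemma supplied your argument is complete; the paper's truncated branching is simply the cheaper way to reach the same vector.
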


\begin{proof}
    If $G$ has a vertex of degree 1, then we may apply the branching vector $(1,2,3,3,4)$ from Lemma~\ref{lem:degree-one}, which is better than $(1,2,2,3,3,4)$.  
    Hence, we may assume that $G$ has minimum degree $2$.
    Let $R$ be a twin class with at least two members, and let $u$ be a vertex such that $N(R) \cap N(u)$ and $N(R) \triangle N(u)$ are non-empty.
    
    Suppose that there exists $z \in N(R) \setminus N(u)$.
    Next, assume that $N(R) \cap N(u)$ has at least two vertices $v, w$.   We can branch on putting $R, u$ in the same bicluster, one case deleting at least $2$ edges between $R$ and $z$, the other case inserting $1$ edge $uz$.
    Then we branch on putting $R, v$ in distinct biclusters, having $4$ branches for the possible deletions between $R, u$ and $v, w$.  Since all members of $R$ undergo the same deletions, the branches respectively make $2, 3, 3, 4$ deletions.  This yields branching vector $(1, 2, 2, 3, 3, 4)$.
    Assume instead that $N(R) \cap N(u)$ has only one vertex $v$.  Since $deg(u) \geq 2$,  $N(u) \setminus N(R)$ has at least one vertex $w$.  Branching on separating $R$ and $u$ makes $1$ and $2$ deletions because of $v$, and branching on putting $R, u$ in the same bicluster makes $2,3,3,4$ modifications because of the four ways to handle $v, w$, with the same branching vector $(1,2, 2,3,3,4)$.  

    Finally, suppose that $z$ does not exist, i.e. $N(R) \subseteq N(u)$.  Then since $|N(R)| \geq 2$, there exist $v, w \in N(R) \cap N(u)$, and there exists $z \in N(u) \setminus N(R)$.  This yields the same branching vector as the first case from the previous paragraph.
    \qed
\end{proof}

Together, our analysis of degree $1$ vertices and twin classes allow us to handle each corner case of Algorithm~\ref{alg:mainbranching} to improve their complexity.

\begin{lemma}\label{lem:d-geq-2}
    Let $G$ be a bipartite graph on which Rule 1 does not apply.  Suppose that $G$ has no vertices of degree $1$ and no twins.
    Suppose further that there exist $u, v \in V(G)$ such that $N(u) \cap N(v) \neq \emptyset$ and $|N(u) \triangle N(v)| \geq 2$.  Then we may achieve branching factor less than $2.581$.
\end{lemma}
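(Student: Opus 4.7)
The plan is to run Algorithm~\ref{alg:mainbranching} on the given pair $u, v$ and bootstrap each resulting sub-instance with either Lemma~\ref{lem:degree-one} or Lemma~\ref{lem:twins}. By Corollary~\ref{cor:lrr-better-high-c-d}, the basic branching factor $lrr(c, d)$ is worst when $c$ and $d$ are as small as possible, and the assumption $c \ge 1, d \ge 2$ pins the extremal case at $(c, d) = (1, 2)$. Larger $(c, d)$ yield strictly smaller basic factors and admit an analogous (even easier) refined analysis, so I would concentrate on $(c, d) = (1, 2)$.

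In this extremal case, $2c + d = 4 = deg(u) + deg(v)$ combined with minimum degree $2$ forces $deg(u) = deg(v) = 2$. Writing $N(u) = \{w, x\}$ and $N(v) = \{w, y\}$ with $w$ the common neighbor, Algorithm~\ref{alg:mainbranching} produces $4$ same-bicluster sub-branches (one per $Z \subseteq \{x, y\}$, each of cost $h = 2$) and $2$ different-bicluster sub-branches (each of cost $h = 1$). The key observation is that every sub-branch leaves a graph with exploitable additional structure. (i) Every same-bicluster sub-branch makes $u, v$ a twin class of size $2$; the specific sub-branch $Z = \emptyset$ additionally drops both to degree $1$. (ii) Each different-bicluster sub-branch deletes exactly one edge $uw$ or $vw$, reducing one of $u, v$ to degree $1$; by Lemma~\ref{lem:deg-one-no-bicluster} the affected component is not a bicluster.

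After applying Rule~1 to any fully resolved bicluster components (which strictly improves the bound), I would invoke Lemma~\ref{lem:degree-one} with vector $(1,2,3,3,4)$ whenever a degree-$1$ vertex remains (the $Z = \emptyset$ Case~A branch and both Case~B branches) and Lemma~\ref{lem:twins} with vector $(1,2,2,3,3,4)$ for the three remaining Case~A branches. Summing the contributions yields the recurrence
\[
T(k) \le 2T(k-2) + 6T(k-3) + 11T(k-4) + 10T(k-5) + 4T(k-6),
\]
whose characteristic polynomial $a^6 - 2a^4 - 6a^3 - 11a^2 - 10a - 4$ is strictly positive at $a = 2.581$, which certifies the claimed branching factor. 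The main obstacle is checking, case by case, that after each modification the resulting graph actually meets the prerequisites of the invoked sub-lemma: that either Rule~1 collapses the affected component (a strictly better sub-case) or that a conflict involving the new degree-$1$ vertex or the new twin pair $\{u, v\}$ persists, which amounts to verifying that $w$ (respectively one of $x, y$) retains a neighbor outside $\{u, v\}$ and that the corresponding component is not resolved into a bicluster.
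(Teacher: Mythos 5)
There is a genuine gap, and it is the central step: the reduction of the whole lemma to the single case $(c,d)=(1,2)$. Corollary~\ref{cor:lrr-better-high-c-d} only states that the \emph{unrefined} branching factor $lrr(c,d)$ of Algorithm~\ref{alg:mainbranching} is monotone in $c$ and $d$; it says nothing about the refined vectors you obtain after bootstrapping with Lemmas~\ref{lem:degree-one} and~\ref{lem:twins}. Monotonicity of the basic factor would only let you discard cases whose \emph{basic} factor is already below $2.581$, and that covers only $c\geq 4$, $(c,d)$ with $c,d\geq 3$, $(2,d)$ with $d\geq 4$, and $(1,d)$ with $d\geq 6$. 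The pairs $(1,2),(1,3),(1,4),(1,5),(2,2),(2,3),(3,2)$ all have basic factors above $2.581$ (e.g.\ $lrr(2,2)=2\sqrt{2}\approx 2.83$, $lrr(1,4)\approx 2.76$, $lrr(2,3)\approx 2.65$), and each needs its own refined argument. Your claim that larger $(c,d)$ ``admit an analogous (even easier) refined analysis'' does not hold: the refinement for $(1,2)$ leans on $deg(u)=deg(v)=2$ and on degree-$1$ creation, whereas for, say, $(3,2)$ or $(2,2)$ the paper must instead argue that two non-twin common neighbors of $u,v$ remain in conflict after the same-bicluster branches, and for $(1,3),(1,4),(1,5)$ it must count which subsets $Z$ survive the pruning of Lemma~\ref{lem:solution-with-most-dels}. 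These are genuinely different case analyses, and they constitute the bulk of the paper's proof; your proposal addresses only one of them.

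Even within the $(1,2)$ case your argument has a hole you yourself flag but do not close. Applying Lemma~\ref{lem:twins} after a same-bicluster branch requires that the new twin class $\{u,v\}$ still lies in a component that is not a bicluster, and with $c=1$ there is only one common neighbor $w$, so the paper's usual device (two non-twin common neighbors staying in conflict) is unavailable; the component can in fact collapse under Rule~1. Your parenthetical that Rule~1 collapsing the component ``strictly improves the bound'' is backwards: if the component is removed, that branch contributes a bare $T(k-2)$ to the recurrence rather than the expanded vector $(3,4,4,5,5,6)$, which is \emph{worse} for the branching factor, not better. The paper avoids this entirely in the $(1,2)$ case by not using the twins lemma there: it prunes the two same-bicluster branches $Z=\{x\}$ and $Z=\{y\}$ outright via Lemma~\ref{lem:solution-with-most-dels} (each forces a degree-$2$ vertex to be incident to an insertion and a deletion), keeps only the two branches of cost $2$, and expands the two separation branches with the degree-$1$ rule, obtaining the vector $(2,2,2,3,4,4,5,2,3,4,4,5)$ with factor about $2.39$. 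So for the one case you treat, the paper's route is both simpler and avoids the unproved conflict-persistence claim.
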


\begin{proof}
Let $c := |N(u) \cap N(v)|$ and $d := |N(u) \triangle N(v)| \geq 2$.  We consider all possible values of $c$ on a case-by-case basis.

\vspace{1mm}

\noindent
\emph{Case 1: $c \geq 4$.}  By Corollary~\ref{cor:lrr-better-high-c-d}, Algorithm~\ref{alg:mainbranching} achieves its worst case branching when $d = 2, c = 4$, which is $lrr(4, 2) \leq 2.55$.

\vspace{1mm}

\noindent
\emph{Case 2: $c = 3$.}
If $d \geq 3$, then Algorithm~\ref{alg:mainbranching} can achieve branching factor $lrr(3, 3) \leq 2.52$.
So assume $c = 3$ and $d = 2$.  Let $x, y$ be distinct vertices of $N(u) \cap N(v)$.  Notice that since $x, y$ are not twins but have common neighbors, they are in conflict.  Now, branching on putting $u, v$ in distinct biclusters creates eight branches with $3$ deletions each.  
Branching on putting $u, v$ in the same bicluster requires four branches with $2$ modifications each, all incident to vertices in $N(u) \triangle N(v)$.  In each such branch, $x$ and $y$ are still in conflict because their neighborhood has not changed and, moreover, $u$ and $v$ have become twins.  It follows that the connected component containing $u$ and $v$ in these branches is not a bicluster and will not be removed by Rule 1.  Furthermore, in those four branches, Lemma~\ref{lem:twins} is applicable, in which case we may apply branching vector $B = (1, 2, 2, 3, 3, 4)$ or better.  
Since we already made two modifications before applying this rule, the actual branching vector of these four branches is $(3, 4, 4, 5, 5, 6)$ or better.  
This amounts to a branching vector in which $3$ appears $12$ times, $4$ and $5$ appear $8$ times, and $6$ appears $4$ times, or a better branching vector.  Since better branching vectors result in branching factors that are not worse, the former is the worst case, and it has branching factor less than $2.56$.

\vspace{1mm}

\noindent 
\emph{Case 3: $c = 2$}. 
If $d \geq 4$, Algorithm~\ref{alg:mainbranching} achieves branching factor $lrr(2, 4) \leq 2.55$.
Suppose that $d = 3$ and, without loss of generality, that $|N(u) \setminus N(v)| \geq |N(v) \setminus N(u)|$.  There are four ways with $2$ deletions to put $u, v$ in distinct bicluster, and eight ways with $3$ modifications to put them in the same bicluster.  
Suppose $N(u) \setminus N(v)$ has $3$ vertices.  Since $c = 2$ and $d = 3$, we infer that $deg(v) = 2$.  Of the eight ways to put $u, v$ in the same bicluster, there are four that require at least $2$ insertions incident to $v$ which we may assume do not occur by Lemma~\ref{lem:solution-with-most-dels}.  This yields branching vector $(2,2,2,2,3,3,3,3)$ and branching factor less than $2.39$.   
So suppose that $N(u) \setminus N(v)$ has 2 vertices, and $N(v) \setminus N(u)$ has the other.  Thus $v$ has degree $3$.  We consider only seven ways to put $u, v$ in the same bicluster (with $3$ modifications each), since one way requires two insertions and one deletion incident to $v$, and we may discard it by Lemma~\ref{lem:solution-with-most-dels}.   
Moreover, one of the four way to put $u, v$ in distinct biclusters  deletes two edges incident to $v$, which makes $v$ of degree $1$.  By Lemma~\ref{lem:deg-one-no-bicluster}, we may apply the degree $1$ rule after the $2$ deletions, leading to branching vector $(3, 4, 5, 5, 6)$ or better in that case.  
We thus have seven branches with $3$ modifications; three branches with $2$ deletions; one branch that yields vector $(3, 4, 5, 5, 6)$ or better.  This gives branching factor less than $2.54$.

To finish the case, suppose that $d = 2$.  The four ways with $2$ modifications that put $u, v$ in the same bicluster each produce twins and, as in the previous case, we may assume that they are present in a connected component that is not a bicluster (since $c = 2$ and the two common neighbors remain in conflict, as before).  Thus all four of these branches lead to branching vector $(3, 4, 4, 5, 5, 6)$ or better.  
If $|N(u) \setminus N(v)| = 2$, one of these four ways can be omitted since $deg(v) = 2$ and we cannot make $v$ incident to two insertions.  In that case, the branching vector has four occurrences of $2$ (for $u, v$ in distinct biclusters) and three times the vector $(3, 4, 4, 5, 5, 6)$ with branching factor less than $2.56$.  
So assume that $|N(u) \setminus N(v)| = |N(v) \setminus N(u)| = 1$.
Of the four ways to put $u, v$ in distinct biclusters, two create a degree $1$ vertex by deleting only edges incident to $u$, or to $v$.  By Lemma~\ref{lem:deg-one-no-bicluster}, these yield branching vector $(3, 4, 5, 5, 6)$.  
Thus, we have four occurrences of $(3,4,4,5,5,6)$ due to twins, two occurrences of $2$, and two occurrences of $(3,4,5,5,6)$, with branching factor $2.581$.

\vspace{1mm}

\noindent 
\emph{Case 4: $c = 1$}.  If $d \geq 6$, then we know that the worst case occurs when $d = 6$, and $lrr(1, 6) \leq 2.58$.  Thus assume that $d \leq 5$.
Also assume without loss of generality that $|N(u) \setminus N(v)| \geq |N(v) \setminus N(u)|$.  
There are two branches with $1$ deletion each to put $u, v$ in distinct biclusters.
If $d = 2$, then $deg(u) = deg(v) = 2$. Two of the four ways of putting $u, v$ in the same bicluster, each having $2$ modifications, require an insertion plus a deletion both incident to $u$, or both incident to $v$, which we assume do not occur by Lemma~\ref{lem:solution-with-most-dels}.  Moreover, the two ways of separating $u, v$ create a degree $1$ vertex after $1$ deletion and, using Lemma~\ref{lem:degree-one} and Lemma~\ref{lem:deg-one-no-bicluster}, yield branching vector $(2, 3, 4, 4, 5)$ or better.  This yields branching vector $(2, 2, 2, 3, 4, 4, 5, 2, 3, 4, 4, 5)$ and branching factor $2.39$.  

Assume $d \geq 3$.  Suppose that $|N(v) \setminus N(u)| = 1$, so that $deg(v) = 2$.  When putting $u, v$ in the same bicluster, we keep at most one member of $N(u) \setminus N(v)$ in that bicluster, otherwise $v$ requires two insertions (we again invoke Lemma~\ref{lem:solution-with-most-dels}).  By the same lemma, if we keep one such member, we must also keep the neighbor of $v$ in $N(v) \setminus N(u)$.  Finally, if we keep no member of $N(u) \setminus N(v)$, we cannot keep the neighbor of $v$, as otherwise $u$ would be incident to $deg(u)$ modified edges with one insertion.
Thus we only need to branch on $|N(u) \setminus N(v)| \leq d$ cases (keep one exclusive neighbor of $u$, or none) each requiring $d$ modifications.  For $d = 3,4,5$, this gives respective branching vectors $(1,1,3,3,3), (1,1,4,4,4,4), (1,1,5,5,5,5,5)$ all with branching factor at most $2.49$.  

Suppose that $|N(v) \setminus N(u)| = 2$.  If $|N(u) \setminus N(v)| = 2$, 
$deg(u) = deg(v) = 3$.  Let $\{a_1, a_2\} = N(u) \setminus N(v)$ and $\{b_1, b_2\} = N(v) \setminus N(u)$.  There are 8 ways to put $u, v$ in the same bicluster that satisfy Lemma~\ref{lem:solution-with-most-dels} (if we keep both $a_1, a_2$, we must keep both $b_1, b_2$ because of $v$, if we keep only $a_1$, we must keep $b_1$ or $b_2$, or both, if we keep $a_2$ the same holds, and if we do not keep $a_1$ nor $a_2$, we cannot keep $b_1$ nor $b_2$ because of $u$).  Each of these four branch has $4$ modifications and, counting the two ways to separate $u$ and $v$, this leads to branching factor $2.51$.  If $|N(u) \setminus N(v)| = 3$, let $\{a_1, a_2, a_3\} = N(u) \setminus N(v)$.  There are at most $16$ ways to put $u, v$ in the same bicluster while satisfying Lemma~\ref{lem:solution-with-most-dels}.  To see this, observe that we cannot keep all of $a_1, a_2, a_3$; if we keep two of them we must keep both $b_1, b_2$ (three ways), if we keep one of $a_1, a_2, a_3$ we must keep one of $b_1$ or $b_2$ ($3 \cdot 3 = 9$ ways), and if we keep none there are four options for $b_1$ and $b_2$.  Each branch has $5$ modifications and this leads to branching factor $2.45$.

The case $|N(u) \setminus N(v)| \geq 4$ cannot occur at this point because we assume $d \leq 5$.  Similarly,  the case $|N(v) \setminus N(u)| \geq 3$ cannot occur, since $|N(u) \setminus N(v)| \geq |N(v) \setminus N(v)|$, and we assume $d \leq 5$.
\qed
\end{proof}

\begin{lemma}\label{lem:d-equal-1}
    Suppose that $G$ has minimum degree at least $2$ and maximum degree at least $3$, and suppose that $G$ has no twins.  Suppose that for every pair of vertices $u, v$ that are in conflict, we have $|N(u) \triangle N(v)| = 1$.  

    Then $G$ has maximum degree exactly $3$.  Moreover, 
     we may achieve branching vector $(1, 1)$ and thus branching factor $2$.
\end{lemma}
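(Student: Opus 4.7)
The plan is to show that the four hypotheses are mutually inconsistent, so both conclusions hold vacuously; on the way, the observation that three of the four hypotheses alone force the maximum degree to be at most $3$ supplies the first assertion. Let $u$ be a vertex of maximum degree $d$, so $d \geq 3$ by hypothesis. For each $w \in N(u)$, the minimum-degree-$\geq 2$ condition yields a vertex $v_w \in N(w) \setminus \{u\}$. Since $u$ and $v_w$ share $w$ and are not twins, they are in conflict, so $|N(u) \triangle N(v_w)| = 1$ by the symmetric-difference hypothesis. Because $|N(v_w)| \leq d$, the only possibility is $N(v_w) = N(u) \setminus \{w_\star\}$ for some $w_\star \in N(u) \setminus \{w\}$, yielding $|N(v_w)| = d - 1$.

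Next I would argue that $v_w$ does not depend on $w$. Given two distinct vertices $v_w \neq v_{w'}$ obtained in this way, both $N(v_w)$ and $N(v_{w'})$ are size-$(d-1)$ subsets of $N(u)$. If they coincide, then $v_w$ and $v_{w'}$ are twins (forbidden); otherwise they differ in exactly $2$ elements while sharing $d - 2 \geq 1$ common neighbors (using $d \geq 3$), producing a conflict with symmetric difference $2$ and violating the hypothesis. Hence a single vertex $v$ plays the role of $v_w$ for every $w \in N(u)$.

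This forces $v$ to be a neighbor of each $w \in N(u)$, so $N(u) \subseteq N(v)$ and $|N(v)| \geq d$, contradicting $|N(v)| = d - 1$. The contradiction shows that no graph satisfies all four hypotheses simultaneously, so both conclusions of the lemma hold vacuously. The main conceptual hurdle is accepting this vacuous nature: the claimed branching vector $(1,1)$ is effectively a placeholder, since the algorithm never reaches this configuration, and the overall complexity analysis is consequently unaffected.
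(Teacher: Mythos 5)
Your proposal is correct, and it takes a genuinely different route: you show that the four hypotheses are jointly unsatisfiable, so the lemma holds vacuously. Your contradiction is sound: for a maximum-degree vertex $u$ with $\deg(u)=d\geq 3$, every $w\in N(u)$ has a second neighbor $v_w$, which is forced (by the no-twins and symmetric-difference-one hypotheses, together with the degree bound) to satisfy $N(v_w)=N(u)\setminus\{w_\star\}$; any two such vertices with distinct neighborhoods would conflict with symmetric difference $2$ (they share $d-2\geq 1$ common neighbors), and with equal neighborhoods they would be twins, so they all coincide with a single $v$ of degree $d-1$ that must nonetheless be adjacent to all of $N(u)$ --- a contradiction. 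The paper instead does constructive work: it first rules out $\deg(u)\geq 4$ by a chain argument on the degrees of three common neighbors of $u$ and a conflict partner $v$ (pairwise conflicts with symmetric difference $1$ force pairwise degree differences of exactly $1$, impossible for three vertices), and then, in the $\deg(u)=3$ case, gives an exchange argument showing that $u$ and $v$ may be assumed to lie in the same bicluster, which yields the two branches of the $(1,1)$ vector. Your argument reveals that this second part of the paper's proof analyzes a configuration that can never occur, and hence that in the main theorem the ``all conflicts have symmetric difference $1$'' case is empty; either way the overall $O^*(2.581^k)$ bound is unaffected, since the worst branching factor comes from Lemma~\ref{lem:d-geq-2}. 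One small quibble: your parenthetical claim that three of the four hypotheses already force maximum degree at most $3$ is not actually established by your argument (which uses all four hypotheses to derive the contradiction for every $d\geq 3$); the first assertion of the lemma is obtained only vacuously, but that suffices for correctness of the statement as written.
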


\begin{proof}
    
    Let $u$ be a vertex of maximum degree and let $v$ be a vertex in conflict with $u$.  
    Suppose that $deg(u) \geq 4$.  Since $|N(u) \triangle N(v)| = 1$, we have $N(v) \subset N(u)$ and $deg(v) = deg(u) - 1 \geq 3$.  
    Let $x, y, z$ be three distinct common neighbors of $u$ and $v$, and suppose without loss of generality that $deg(x) \geq deg(y) \geq deg(z)$. 
    Since none of $x, y, z$ are twins and since they have neighbors in common, each pair must be in conflict.  
    This implies that $|N(x) \triangle N(y)| = 1$ and thus that $deg(x) = deg(y) + 1$.  By the same reasoning, $deg(y) = deg(z) + 1$.  Thus $deg(x) = deg(z) + 2$, a contradiction since $|N(x) \triangle N(z)| = 1$ is not possible.

    Thus we may assume that $deg(u) = 3$.  Let $x, y, z$ be its neighbors, and assume $N(v) = \{x, y\}$.  Then $x, y$ must be in conflict, so we may assume without loss of generality that $x$ has a neighbor $z'$ that $y$ does not have (and no other neighbor because the maximum degree is $3$).  
    We claim that there exists an optimal biclustering in which $u, v$ are in the same bicluster.  
    Consider an optimal biclustering $\B$ in which this is not the case. Then there are at least two deleted edges that are incident to $u$ or $v$ because of $x$ and $y$.  
    Consider the biclustering $\B'$ obtained from $\B$ by removing $u, v, x, y$ from their respective biclusters, and adding the bicluster $\{u, v, x, y\}$.  The only modifications that could be in $\B'$ but not in $\B$ are the deletions between $uz$ and $xz'$.  On the other hand, $\B$ has at least two modifications not in $\B'$, which are the aforementioned deletions.  Hence $\B'$ is also optimal (another way to see this is as follows: when we temporarily remove $\{u, v, x, y\}$ from the biclustering, we save at least two deletions, and when we reincorporate it, we introduce exactly two deletions).  We may therefore assume that $u, v$ are in the same bicluster, which only requires two branches (delete $uz$ or insert $vz$).
    \qed
\end{proof}

\begin{theorem}
    The \textsc{Bicluster Editing} problem can be solved in time $O^*(2.581^k)$.
\end{theorem}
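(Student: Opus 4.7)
The plan is to construct a modified version of Algorithm~\ref{alg:mainbranching} that handles several special cases before invoking its main branching step, and to show that the worst-case branching factor across all cases stays below $2.581$. At the start of each recursive call I would first apply Rule 1, return YES if $G$ becomes empty, and solve in polynomial time if the maximum degree of $G$ is at most $2$. In addition, after any edge modification I would verify that the running partial solution conforms to the structure guaranteed by Lemma~\ref{lem:solution-with-most-dels} (no vertex $v$ is touched by more than $deg(v)$ modified edges, and if it is touched by exactly $deg(v)$ of them, they are all deletions), aborting the branch otherwise; that lemma ensures no optimal solution is lost this way.

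The main branching flow is then a cascade of mutually exclusive cases: (i) if $G$ has any vertex of degree $1$, invoke Lemma~\ref{lem:degree-one}, which gives branching vector $(1,2,3,3,4)$ and factor $< 2.066$; (ii) otherwise, if $G$ has any twin class of size $\geq 2$, invoke Lemma~\ref{lem:twins}, giving factor $< 2.317$; (iii) otherwise pick conflicting vertices $u,v$ (they exist by the contrapositive of Lemma~\ref{lem:connected-component-bicluster}, since no component is a bicluster) --- if $|N(u)\triangle N(v)|\geq 2$, invoke Lemma~\ref{lem:d-geq-2} for factor $<2.581$; (iv) otherwise every conflict has $|N(u)\triangle N(v)|=1$, and Lemma~\ref{lem:d-equal-1} forces the maximum degree to be exactly $3$ and yields branching vector $(1,1)$ with factor $2$.

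Correctness follows from the correctness argument of Theorem~\ref{thm:mainalgo}, the safety of Rule 1, the applicability of Lemma~\ref{lem:solution-with-most-dels}, and the safety arguments within each of Lemmas~\ref{lem:degree-one}, \ref{lem:twins}, \ref{lem:d-geq-2}, and \ref{lem:d-equal-1}. For the running time, the worst-case branching factor is $\max\{2.066,\, 2.317,\, 2.581,\, 2\} < 2.581$, and since each recursive call runs in polynomial time (the sets $C$ and $D$ branched over in Algorithm~\ref{alg:mainbranching} are capped at a constant), the total running time is $O^*(2.581^k)$.

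The delicate part of the argument is keeping the preconditions of the various lemmas satisfied across recursive calls. In particular, Lemma~\ref{lem:d-geq-2} requires no degree-$1$ vertices and no twins, and Lemma~\ref{lem:d-equal-1} also requires maximum degree at least $3$. The cascading order above enforces these at entry, but several branches produce new degree-$1$ vertices or new twins, and we must ensure that Rule 1 does not erase them before the corresponding sub-rule fires; this is where Lemma~\ref{lem:deg-one-no-bicluster} and the internal conflict-preservation argument used inside Lemma~\ref{lem:d-geq-2} (especially Case 3 with $c=d=2$, where the $2.581$ bound is actually attained) do the heavy lifting. Threading those invariants through the full case analysis, rather than the individual branching-factor computations, is the main obstacle.
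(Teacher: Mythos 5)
Your proposal is correct and follows essentially the same route as the paper: the final theorem is proved exactly by the cascade you describe (degree-$1$ rule, then twins, then the $|N(u)\triangle N(v)|\geq 2$ case, then the $|N(u)\triangle N(v)|=1$ case), with correctness inherited from Theorem~\ref{thm:mainalgo} and the worst factor $2.581$ coming from Lemma~\ref{lem:d-geq-2}. The invariant-threading concerns you flag (via Lemma~\ref{lem:deg-one-no-bicluster} and conflict preservation) are indeed where the paper does the work, but inside the supporting lemmas rather than in the theorem's proof itself.
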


\begin{proof}
    We assume that $G$ has no connected component that is a bicluster.  If $G$ has maximum degree $2$ or less, then Algorithm~\ref{alg:mainbranching} solves $G$ in polynomial time.  
    If $G$ has a vertex of degree $1$, we can achieve branching factor at most $2.066$ by Lemma~\ref{lem:degree-one}.
    Otherwise, if $G$ has twins, we may achieve branching factor at most $2.317$ by Lemma~\ref{lem:twins}.
    If $G$ has no vertex of degree $1$ and no twins, but has conflicting vertices $u, v$ with $|N(u) \triangle N(v)| \geq 2$, then Lemma~\ref{lem:d-geq-2} applies and we get branching factor $2.581$.  If no such pair of vertices exists, every conflicting $u$ and $v$ satisfy $|N(u) \triangle N(v)| = 1$ and Lemma~\ref{lem:d-equal-1} gives branching factor $2$.  The worst branching factor is therefore $2.581$.
    \qed
\end{proof}

\noindent
\textbf{Concluding remarks.}  We conclude with some open ideas.  First, one could check whether our branching algorithm approach could be improved by combining it with the half-edge ideas used in \textsc{Cluster Editing} (see~\cite{bocker2012golden}).  Second, we do not know the true complexity of Algorithm~\ref{alg:mainbranching}.  Its running time should be better than $O^*(2.581^k)$, since recursions at deeper levels will have twins classes with more and more vertices, and large twin classes have good branching factors.  However, these analyzes become increasingly technical and less interesting --- unless an elegant formalization can be devised.  
Let us also remind the reader that all our results use the assumption that the given graph $G$ is bipartite, and that no edge can be added between vertices on the same side.  To our knowledge, no linear kernel is known for the more general variant in which these assumptions are removed.
It is also open whether the $4.5k$ kernel can be improved with additional or different reduction rules.

\vspace{3mm}

\noindent
\textbf{Acknowledgements.}  ML acknowledges Benoît Charbonneau and Pierre-Luc Parent for their help in finding the $4.5k$ lower bound on the kernel size.

ML acknowledges financial support from the Natural Sciences and Engineering Research Council of Canada - Discovery grant.

\vspace{3mm}

\noindent
\textbf{Declarations.}  The authors declare that they have no competing interests.

\bibliographystyle{splncs04}

\bibliography{main}

\end{document}